\newcommand{\mytitle}{\textbf{IGS: an IsoGeometric approach for Smoothing on surfaces }}
\newcommand{\scal}[2]{\left\langle {#1},{#2} \right\rangle} 
\newcommand{\norm}[1]{\left\Vert #1\right\Vert}
\newcommand{\trace}{\text{trace}}
\newcommand{\argmin}[1]{\underset{#1}{\operatorname{arg}\,\operatorname{min}}\;}
\newcommand{\dblue}[1]{\textcolor{black}{#1}}
\DeclareFontFamily{OT1}{pzc}{}
\DeclareFontShape{OT1}{pzc}{m}{it}{<-> s * [1.10] pzcmi7t}{}
\DeclareMathAlphabet{\mathpzc}{OT1}{pzc}{m}{it}
\newtheorem{lemma}{Lemma}[section]
\newtheorem{theorem}{Theorem}[section]
\newtheorem{proposition}{Proposition}[section]
\numberwithin{equation}{section}
\begin{document}
\title{\mytitle}

\author{
Matthieu Wilhelm ${}^{1}$\footnote{Corresponding author. E-mail: matthieu.wilhelm@unine.ch 
Phone: +41~32~7181971.},\ Luca Ded\`e ${}^{2}$ ,\ Laura M. Sangalli${}^{3}$, \ Pierre Wilhelm${}^{4}$ \\
{}
${}^{1}$  \ \normalsize Institut de Statistique, 
\normalsize Universit\'e de Neuch\^atel\\[-0.1cm]
\normalsize Avenue de Bellevaux 51, 2000 Neuch\^atel, Switzerland\\[-0.1cm]
{}
${}^{2}$ \ \normalsize Chair of Modeling and Scientific Computing, 
\normalsize \'Ecole Polytechnique F\'ed\'erale de Lausanne\\[-0.1cm]
\normalsize Av. Piccard, Station 8, 1015 Lausanne, Switzerland\\[-0.1cm]
{}
${}^{3}$ \ \normalsize MOX, Dipartimento di Matematica, 
\normalsize Politecnico di Milano\\[-0.1cm]
\normalsize  Via Bonardi 9, 20133 Milano, Italy\\[-0.1cm]
{}
${}^{4}$ \ \normalsize S3,  Swiss Space Systems\\[-0.1cm]
\normalsize Zone Industrielle la Palaz A3, 1530 Payerne, Switzerland \\[-0.1cm]
{}
}
\maketitle

\begin{center}
\begin{minipage}{0.85\linewidth}
\begin{center}
{\large \textbf{Abstract}}
\end{center}
\noindent
We propose \dblue{a novel} approach for smoothing on surfaces, namely estimating a function starting from noisy and discrete measurements.
More precisely, we aim at estimating functions lying on a surface represented by NURBS, which are geometrical representations commonly used in industrial applications. The estimation is based on the minimization of a penalized least-square functional. The latter is equivalent to solve a 4th-order Partial Differential Equation (PDE). In this context, we use Isogeometric Analysis (IGA) for the numerical approximation of such surface PDE, leading to an IsoGeometric Smoothing (IGS) method for fitting data spatially distributed on a surface. Indeed, IGA facilitates encapsulating the exact geometrical representation of the surface in the analysis and also allows the use of at least globally $C^1-$continuous NURBS basis functions for which the 4th-order PDE can be solved using the standard Galerkin method. We show the performance of the proposed IGS method by means of numerical simulations and we apply it to the estimation of the pressure coefficient, and associated aerodynamic force on a winglet of the SOAR space shuttle.
\newline
\noindent \textbf{Keywords:} Functional Data Analysis; Isogeometric Analysis; Smoothing on Surfaces.

\end{minipage}
\end{center}
\section{Introduction}
The estimation of a function from a set of noisy data is a very common task, which is often tackled by minimization of a penalized least-square functional, where the penalty involves a differential operator, commonly based on second order derivatives, and occasionally on derivatives of a different order. Classical examples are offered by smoothing splines (see, e.g., \citet{Silverman05} and references therein) for estimating functions defined over real intervals, thin-plate splines (see, e.g., \citet{Wahba90}, \citet{Wood06} and references therein), Soap film smoothing \citep{Wood08}, splines over triangulations (see, e.g., \citet{Lai07}) for estimating functions defined over regions of $\mathbb{R}^2$, and spherical splines (see, e.g., \citet{Wahba81, Baramidze06, Alfeld96}) for estimating functions defined over spheres or spheres-like surfaces. In this context, one of the main challenges in  minimizing such penalized least-square functional consists in determining a suitable finite dimensional space representing, at the discrete level, the infinite dimensional space to which the function belongs. In other words, the challenge is to find a finite dimensional problem which is tractable and whose solution is close to the solution in the infinite dimensional space. The same challenge arises when looking for the numerical solution of a PDE. This common goal has been recently exploited to develop statistical tools to deal with spatially distributed data. In this respect, \citet{Ramsey02} considered planar smoothing optimizing a penalized least-square functional with a regularization term involving the Laplacian, and used the Finite Element Method (FEM)  (see, e.g., \citet{Quarteroni94}) to solve the estimation problem. \citet{Sangalli13} generalized the method proposed by \citet{Ramsey02} to include space-varying covariates and to account for boundary conditions, while \citet{Wilhelm13} explored a generalized linear version of the method.
\citet{Azzimonti1,Azzimonti2} further extended the model of \citet{Sangalli13} to account for any elliptic differential penalization, not necessarily based on the Laplacian operator.
 \citet{Ettinger12} and \citet{Dassi15} dealt with the case of functions defined over two dimensional Riemannian manifolds, by considering a regularizing term based on the Laplace-Beltrami operator and exploited a conformal parametrization of the manifold to solve an equivalent estimation problem on $\mathbb{R}^2$. \citet{Duchamp03} also explored smoothing on complex surfaces using a penalization based on the Laplace-Beltrami operator. A common feature of all these contributions is the use of FEM to estimate the underlying function corresponding to the observed data.

In this paper, we consider the estimation of functions defined over surfaces in $\mathbb{R}^3$ starting from a discrete set of noisy observed data in points distributed on such surfaces. More precisely, we refer to sufficiently smooth functions defined over surfaces that can be represented by NURBS (Non-Uniform Rational Basis Splines). Indeed, NURBS are commonly used in Computer Aided Design (CAD) to represent most of the geometries of Engineering and industrial interest. From a more general point of view, the proposed model is a particular case of a Generalized Additive Model (GAM) \citep{Hastie90} where the smooth component is defined on a surface. Hence, most of the theoretical results of GAMs can be directly applied to this model, e.g., for the quantification of uncertainty.

IGA has been first introduced by \citet{Hughes05} with the main idea of using the same basis functions to represent the geometry and then to approximate the solution of the PDEs defined in such computational domains. This facilitates encapsulating the exact geometrical representation when performing the analysis of PDEs defined in the computational domain; see \citet{Cottrell09}. 
In this respect, the isogeometric paradigm facilitates the use of an exact representation of the surface, while most of the current methodologies, as FEM, generally only handle its approximation. This may induce an error on the solution due to the approximation of the geometry and may require complex meshing procedures. As mentioned, to estimate the function over the surface, starting from its discrete and noisy measurements, we minimize a least-square functional where the penalty involves the Laplace-Beltrami operator associated to the surface, analogously to \citet{Ettinger12}, \citet{Duchamp03} and \citet{Wahba81}. The estimation problem is tackled directly on the surface and using IGA to numerically solve the associated surface PDE. For this reason, we name the resulting method IsoGeometric Smoothing (IGS). High-order PDEs can be straightforwardly solved by NURBS-based IGA; indeed, globally $C^k-$continuous NURBS basis functions can be easily defined on surfaces for some $k=0,\dots,p-1$, where $p$ is the polynomial degree. This allows to use standard Galerkin method for numerically solving the PDE. In this respect, \citet{Dede1} studied the use of IGA for 2nd-order PDEs defined on surfaces, \citet{Dede2} analysed IGA for high-order PDEs, while \citet{Bartezzaghi15} for high-order PDEs defined on surfaces. All these works showed the efficiency and accuracy of IGA for the spatial approximation of surface PDEs.  \dblue{Partially related to our work, NURBS-based and IGA approaches are used in \citet{reviewer1} and in \citet{reviewer2} as calibration tools.
}
\begin{figure}[t]
\centering
\includegraphics[scale=.5]{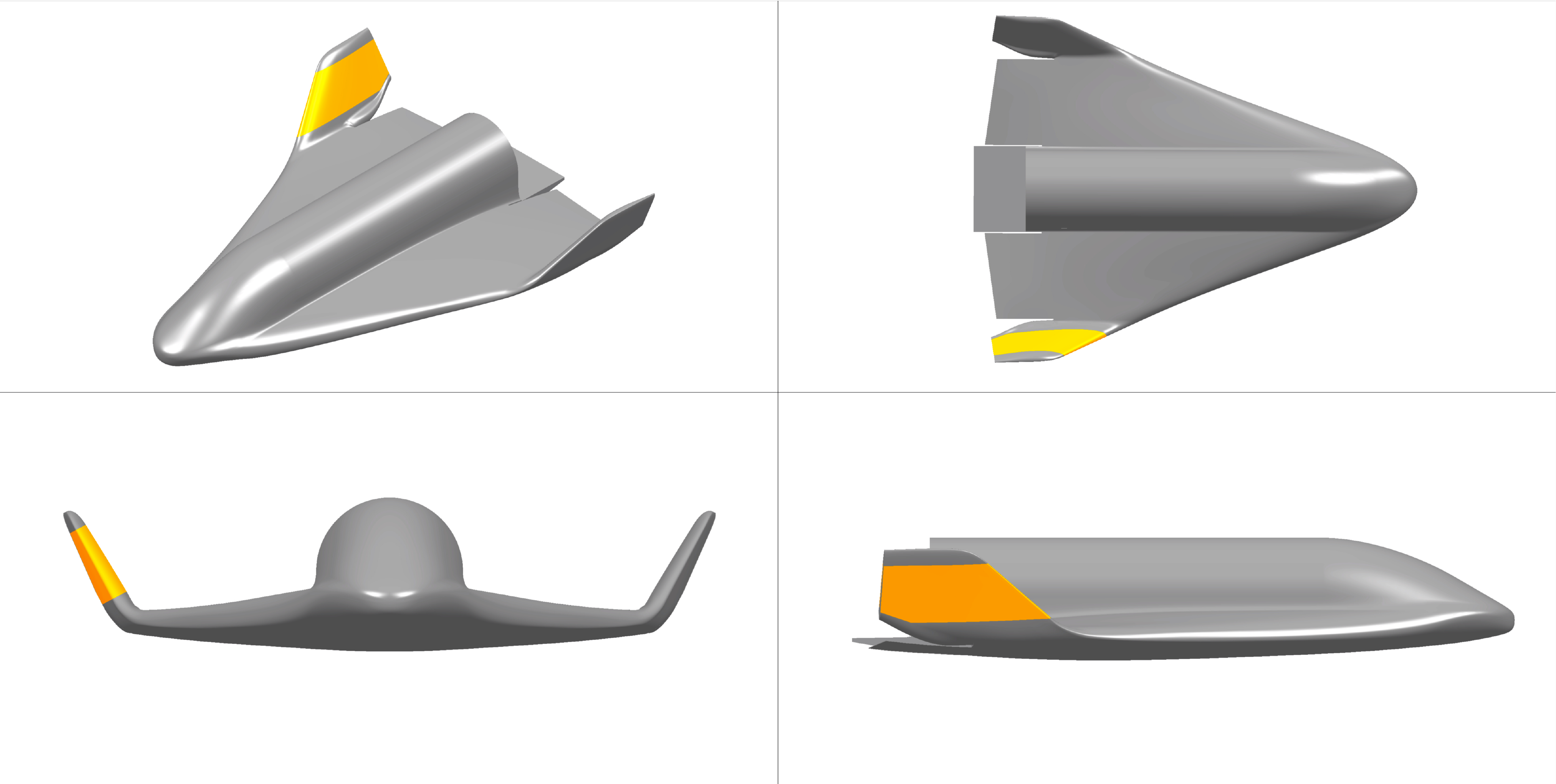}
\caption{\label{fig::shuttle} The SOAR shuttle highlighting the  inboard winglet represented by a single NURBS patch. [Courtesy of \textit{S3, Swiss Space Systems Holding SA}].}
\end{figure}

In this paper, we introduce the geometrical background to define the application framework of IGS. Then, we show how one can minimize a least-square functional involving a penalization term and hence estimate a function starting from a set of discrete and noisy measurements. We assess and compare IGS to the Thin Plate Splines (TPS) \citep{Duchon77, Wahba90} by means of numerical simulations. Finally, as industrial application, we estimate the pressure coefficient and the aerodynamic force acting on the inboard winglet of the space shuttle SOAR, designed by \textit{S3, Swiss Space Systems Holding SA}. S3  is a Swiss company currently developping, manufacturing, certifying, and operating a launch system for small satellites of weight inferior to 250 kg. A geometry of the SOAR suborbital shuttle for a preliminary study of aerodynamic forces is shown in Figure~\ref{fig::shuttle}. In this application, the data are pointwise measurements of the pressure coefficient and the associated quantity of interest is the aerodynamic force. We propose a method to estimate functions defined over a NURBS surface starting from scattered noisy observations. 

This paper is organized as follows. In Section \ref{sec::NURBS}, we briefly recall B-splines and NURBS. In Section \ref{sec::frame_model}, we introduce some analytical tools and the IGS method. We show results for two numerical simulations in Section \ref{sec:num}: the first one allows to compare IGS to TPS, while the second one corresponds to a complex NURBS surface, for which planar smoothers are not suited. Finally, in Section \ref{sec:soar},  we show the results of the estimation of the pressure coefficient and aerodynamic force over the winglet of the SOAR space shuttle. Conclusions follow.
\section{B-splines and NURBS}
\label{sec::NURBS}
NURBS are widely used in CAD for geometrical representation of surfaces \citep{Piegl97}. We start by defining an \emph{open knot vector} of degree $p$ as a \dblue{sequence} of values $\varXi= \left\{\xi_1, \dots, \xi_{n+p+1} \right\}$, with knots $\xi_1\leq \ldots \leq \xi_{n+p+1}$, where the first and the last knots are repeated $p+1$ times. The interior knots can be repeated at most $p$ times and, if a knot is repeated $m$ times, we say that its multiplicity is $m$. The B-spline basis functions $N_{i,p}$ are defined recursively using the Cox-De Boor formula.
We summarize some properties of B-spline basis functions $N_{i,p}$ of index $i$ and degree $p$, for $i=1,\dots, n$.
\begin{itemize}
\item The basis function $N_{i,p}$ possesses $p-m$ continuous derivatives across a knot of multiplicity $1\leq m\leq p$ and is $C^\infty-$ continuous between the knots.
\item The support of the basis function $N_{i,p}$ is compact and contained in $p+1$ knots spans $[\xi_i, \xi_{i+p+1}]$.
\item The basis functions are pointwise nonnegative, i.e. $N_{i,p}(\xi)\geq 0,$ for all $ i=1,\dots,n$.
\item They form a partition of the unity, that is $\displaystyle \sum_{i=1}^n N_{i,p}(\xi)=1$, for all $\xi \in [\xi_1, \xi_n]$.
\end{itemize}
Starting from the basis function $N_{i,p}$, the NURBS basis functions $R_{i,p}(\xi)$ are defined as projective transformations of B-spline basis functions. Let $w_1,\dots,w_n >0$ be positive weights, then, NURBS basis functions are defined as:
$$R_{i,p}(\xi)= \frac{w_i}{\sum_{j=1}^n w_j N_{j,p}(\xi)} N_{i,p}(\xi), \quad \forall i=1,\dots, n.$$
A NURBS curve $C(\xi) \in \mathbb{R}^d$, with $d=2,3$ and control points $\mathbf{B}_1,\dots,\mathbf{B}_n \in \mathbb{R}^d$ is defined as:
$$C(\xi)= \sum_{i=1}^n R_{i,p}(\xi)\mathbf{B}_i.$$
\begin{figure}[t]
\begin{subfigure}{\linewidth}
\centering
\includegraphics[scale=0.41]{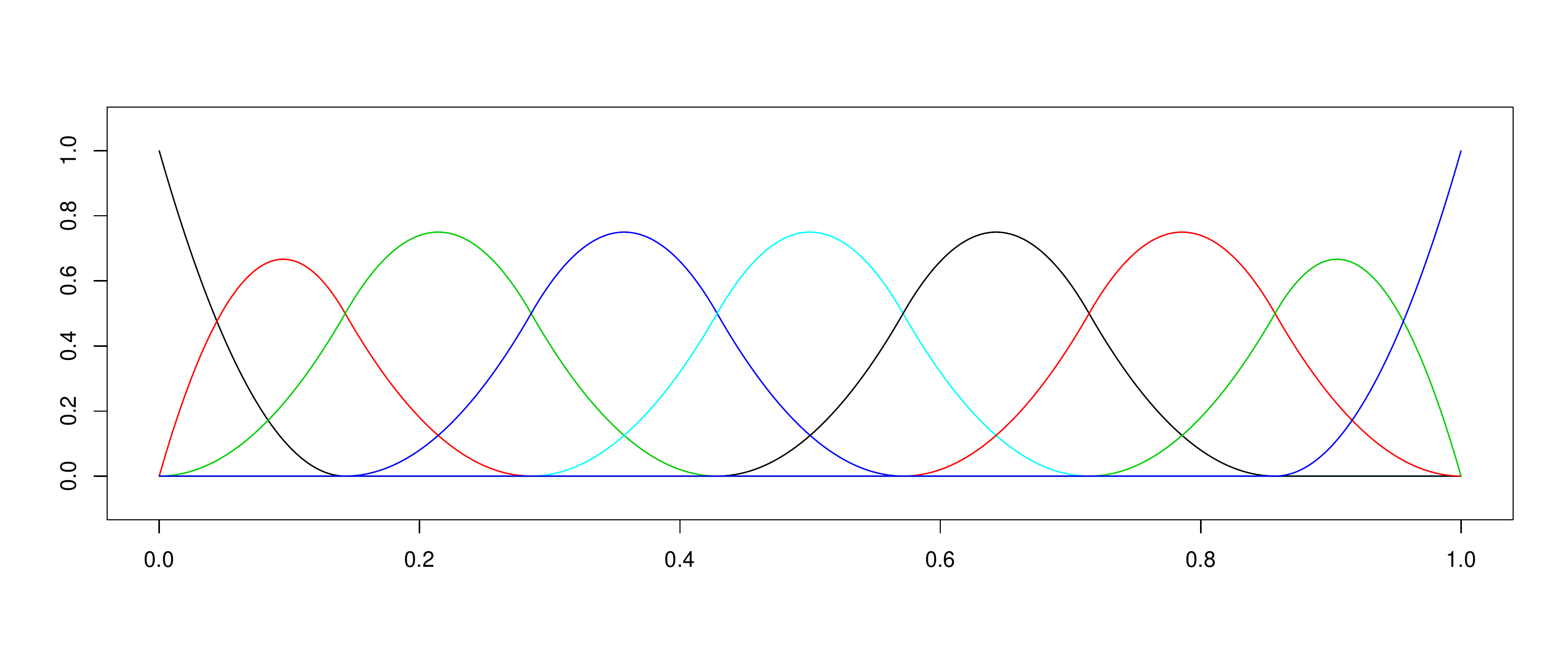} 
\end{subfigure}
\begin{subfigure}{\linewidth}
\centering
\includegraphics[scale=0.41]{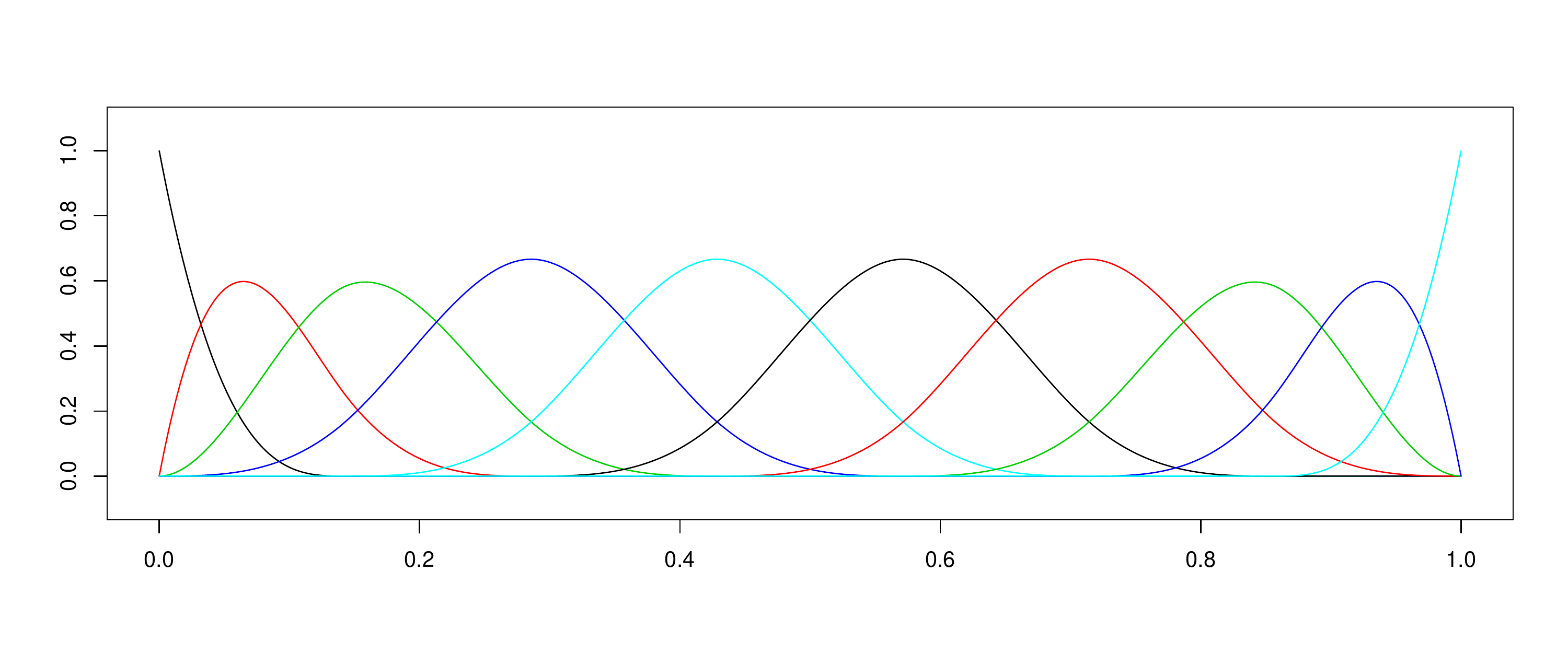} 
\end{subfigure} 
\caption{\dblue{B-spline basis functions  for different knots vectors. Basis functions of degree $p=2$ and globally $C^1-$continuous (top) and of degree $p=3$ and globally $C^2-$continuous (bottom).}}
\end{figure}

To define surfaces with NURBS, we resort to the tensor product \dblue{scheme.} Given two open knot vectors $\varXi= \left\{\xi_1, \dots, \xi_{n+p+1} \right\}$ and $\mathscr{H} =\left\{\eta_1, \dots, \eta_{m+q+1} \right\}$, $Q_{i,p}(\xi)$, for $i=1,\dots,n$, $M_{j,q}(\eta)$, for $j=1,\dots,m$, the corresponding univariate
\dblue{
basis functions, the control points $\mathbf{B}_{ij}\in \mathbb{R}^d,$ and positive weights $w_{ij}$ for $ i=1,\dots, n,\ j=1,\dots,m,$ we define the bivariate NURBS basis functions as: 
$$ R_{i,j}(\xi, \eta) =\frac{Q_{j,p}(\xi)M_{k,q}(\eta) w_{ij}}{\sum_{k=1}^n\sum_{l=1}^m Q_{k,p}(\xi)M_{l,q}(\eta) w_{kl}}, $$
and a NURBS surface as:
$$
S: \Omega \rightarrow \mathbb{R}^d,\quad 
 S(\xi, \eta)=\sum_{i=1}^{n} \sum_{j=1}^{m} R_{i,j}(\xi, \eta)\mathbf{B}_{ij},
$$
where $\Omega = (\xi_1,\xi_m)\times (\eta_1,\eta_m)$. For simplicity, we consider henceforth the same polynomial degree $p$ along both the parametric directions, for which we rewrite the bivariate basis functions simply as $R_{i,p}(\xi, \eta) $, for $i = 1,\dots, N^h,$ where $N^h = n m $.} For an exhaustive description of NURBS, we refer the reader to \citet{Piegl97}. 

We remember that in IGA, one uses the same basis functions to represent the surface and to approximate the solution of the PDE defined in such computational domain. In general, it is possible to enrich the NURBS basis without changing the geometry, \dblue{i.e. by preserving the geometrical mapping, with the goal of obtaining a more accurate solution of the PDEs. In this respect, $h-$refinement indicates a uniform knot insertion, which adds new basis functions while preserving the geometrical mapping, while $p-$refinement refers to an elevation of the polynomial degree of the basis functions, similarly to FEM.} In addition, the so-called $k-$ refinement is peculiar of NURBS and refers to a consecutive order elevation and a knot insertion which allow to increase the polynomial degree and continuity of basis functions. All the refinement procedures are discussed in details in \citet{Cottrell09}, while for an introduction to NURBS in the context of IGA, we refer  the interested reader to \citet{Hughes05}. 
\section{The IGS method}
\label{sec::frame_model}
We describe the mathematical framework of the IGS method. First, we introduce the surface differential operators and then we introduce the IGS method for smoothing functions on surfaces.
\subsection{Geometrical framework}
Following \citet{Dede1}, let $\Omega \subset\mathbb{R}^2$ be an open and bounded parametric domain of finite measure with respect to the topology of $\mathbb{R}^2$. Then, let $\Sigma\subset \mathbb{R}^3$ be a compact, connected, and oriented surface, defined by a NURBS geometrical mapping $X:\Omega\rightarrow\Sigma$ such that: 
\begin{equation}
\label{equ::mapping}
X:\Omega\subset \mathbb{R}^2 \rightarrow\Sigma\subset \mathbb{R}^3,\quad
\mathbf{s}=(s_1,s_2)\mapsto\mathbf{x}=(x_1,x_2,x_3).
\end{equation}
We assume that $X$ is sufficiently smooth, e.g. $X\in C^1(\Omega)$. Then, we define the Jacobian of the mapping $X$, denoted by $\nabla X$ as:
$$\nabla X: \Omega\rightarrow \mathbb{R}^{3\times 2}, \quad \mathbf{s}\mapsto \nabla X(\mathbf{s}),\quad (\nabla X)_{i,j}(\mathbf{s})=\frac{\partial X_i}{\partial \mathbf{s}_j}(\mathbf{s}), \quad i=1,2,3, \ j=1,2.$$
We denote by $(\nabla X)_i$ the $i$-th column of the matrix $\nabla X$.
\dblue{The metric tensor of the mapping $X$ is represented by:}
$$\mathbf{G}: \Omega \rightarrow \mathbb{R}^{2\times2}, \quad \mathbf{s}\mapsto \mathbf{G}(\mathbf{s}), \quad \mathbf{G}(\mathbf{s})= \left(\nabla X(\mathbf{s})\right)^T \nabla X(\mathbf{s})$$
and we denote with $g(\mathbf{s})$ the square root of the determinant of \dblue{the metric tensor} of the mapping $X$:
$$g: \mathbf{\Omega}\rightarrow\mathbb{R}, \quad \mathbf{s}\mapsto g(\mathbf{s}), \quad g(\mathbf{s})=\sqrt{\det{(\mathbf{G}(\mathbf{s}))}}.$$
\dblue{The metric tensor} of the mapping $X$ is assumed to be invertible almost everywhere in $\Omega$.
\subsection{Differential operators}
Let $\phi \in C^2(\Sigma) $ be a smooth function defined over the surface $\Sigma$, i.e.: 
 $$\phi:\Sigma\subset \mathbb{R}^3 \rightarrow \mathbb{R},\quad \mathbf{x}\mapsto \phi(\mathbf{x}),\quad \phi\in C^2(\Sigma),$$
such that we can define the differential operators on the surface $\Sigma$. Then, the projection operator $\mathbf{P}(\mathbf{x})$ on the \dblue{tangent} plane to the surface at the point $\mathbf{x}\in \Sigma$ is given by:
$$\mathbf{P}(\mathbf{x})=(\mathbf{I}-\mathbf{n}_\Sigma\otimes\mathbf{n}_\Sigma)(\mathbf{x})= \mathbf{I}- \mathbf{n}_\Sigma(\mathbf{x})\mathbf{n}_\Sigma(\mathbf{x})^T, \quad \forall \mathbf{x}\in \Sigma,$$
where $\mathbf{n}_\Sigma(\mathbf{x})$ is the unit normal vector to the surface at the point $\mathbf{x}$ and $\mathbf{I}$ is the identity matrix\footnote{We compute $\mathbf{n}_\Sigma(\mathbf{x})$ as $\mathbf{n}_\Sigma(\mathbf{x})=\frac{[\nabla X]_1(\mathbf{s}) \times [\nabla X]_2(\mathbf{s})}{\left\|[\nabla X]_1(\mathbf{s}) \times [\nabla X]_2(\mathbf{\mathbf{s}})\right\|_2}$, where $\mathbf{s}= X^{-1}(\mathbf{x})$.}.
Then, the surface gradient operator, denoted by $\nabla_{\Sigma}$, is defined as the projection of the gradient of a function extended in a tubular region containing $\Sigma$, i.e.:
$$\nabla_{\Sigma}\phi(\mathbf{x}):= \mathbf{P}(\mathbf{x})\nabla \phi(\mathbf{x})=\nabla \phi(\mathbf{x})- (\mathbf{n}_\Sigma(\mathbf{x})^T \nabla \phi(\mathbf{x}))\mathbf{n}_\Sigma(\mathbf{x}).$$
The Laplace-Beltrami operator, which is the surface Laplacian operator, is expressed as:
$$\Delta_{\Sigma} \phi(\mathbf{x})= \nabla_\Sigma\cdot(\nabla_\Sigma \phi(\mathbf{x}))= \trace[\mathbf{P}(\mathbf{x})\, \nabla^2 \phi \, \mathbf{P}(\mathbf{x})],\quad \forall \mathbf{x}\in \Sigma,$$
where $\nabla_\Sigma \cdot \mathbf{v}(\mathbf{x})= \trace(\nabla_\Sigma \mathbf{v}(\mathbf{x}))$, for all $ \mathbf{v}\in C^1(\Sigma)$, is the surface divergence and \newline $(\nabla^2 \phi)_{ij}(\mathbf{x})=\frac{\partial^2 \phi}{\partial x_i\partial x_j}(\mathbf{x})$ is the Hessian matrix of $\phi$. These operators can be rewritten in the parametric domain $\Omega$ using the geometrical mapping (\ref{equ::mapping}) as:
\begin{equation}
\label{equ::diff_operators}
\nabla_\Sigma \phi(\mathbf{x})= \nabla X(\mathbf{s}) \mathbf{G}^{-1}(\mathbf{s})\nabla (\phi\circ X)(\mathbf{s})\text{ and }
 \Delta_\Sigma \phi(\mathbf{x})= \frac{1}{g(\mathbf{s})} \nabla\cdot \left[ g(\mathbf{s}) \mathbf{G}^{-1}(\mathbf{s})\nabla (\phi\circ X)(\mathbf{s})\right],
\end{equation}
where $\mathbf{s}=X^{-1}(\mathbf{x})$.
\subsection{Mathematical model}
\label{sec::math_model}
Let us consider $N$ points $\mathbf{p}_1,\dots,\mathbf{p}_N$ located on $\Sigma$ for which the observed values are $y_1,\dots,y_N$ respectively. We assume that:
\begin{equation}
\label{equ::math_mod} 
y_i=f(\mathbf{p}_i)+\varepsilon_i, \quad \forall\  i=1,\dots, N,
\end{equation}
where $f$ is a sufficiently smooth function defined on $\Sigma$ that we aim at estimating and $\varepsilon_i$ are independent observational errors of zero mean and constant variance.

Given a positive smoothing parameter $\lambda$, we aim at minimizing the following parameter dependent functional:
\begin{equation}
\label{equ::functional}
J_{\lambda}(v)=\sum_{i=1}^N (y_i-v(\mathbf{p}_i))^2+\lambda \int_\Sigma (\Delta_\Sigma v)^2\ d\Sigma= \norm{\mathbf{y}-\mathbf{v}_N}_2^2+\lambda \int_\Sigma (\Delta_\Sigma v)^2\ d\Sigma,
\end{equation}
where $\mathbf{y}=(y_1,\dots,y_N)^T$ and $\mathbf{v}_N=(v(\mathbf{p}_1),\dots,v(\mathbf{p}_N))^T$ is the vector of evaluations of the general function $v$ at the points $\mathbf{p}_1,\dots, \mathbf{p}_N$. This functional is the same considered by \citet{Ettinger12} and by \citet{Duchamp03} that uses a finite element representation, and by \citet{Wahba81}, that considers functions on spheres and proposes spherical splines. The functional (\ref{equ::functional}) involves the surface Laplace-Beltrami operator, which is, roughly speaking, a measure of the curvature of the function related to the surface. 
The use of the Laplace-Beltrami operator ensures that the regularization is invariant to rigid transformations of the coordinate system, which is desirable both for theoretical and practical reasons.
If the observational errors are normally distributed, the functional (\ref{equ::functional}) can be viewed as a negative rescaled Gaussian penalized log-likelihood. Hence, in such case, minimizing (\ref{equ::functional}) is equivalent to maximizing a penalized log-likelihood. Then, for a given positive parameter $\lambda$, the estimation problem is:
\begin{equation}
\label{equ::min_functional}
\text{find } \hat{f} \in \mathcal{F}:\  J_\lambda(\hat{f})\leq J_\lambda(v), \quad \forall v\in \mathcal{F},
\end{equation}
where $\mathcal{F}$ is a suitable functional space to be defined to ensure the well-posedness of the problem and $\hat{f}$ is the estimate of $f$ in the functional space $\mathcal{F}$, which should lie at least in $H^2(\Sigma)$, i.e. $\mathcal{F}\subset H^2(\Sigma)$. Indeed, since $\hat{f}\in H^2(\Sigma)\subset C^0(\Sigma)$, the evaluation of $\hat{f}$ at the points $\mathbf{p}_1,\dots,\mathbf{p}_N $ is well defined. Since the problem (\ref{equ::min_functional}) will be later associated to a PDE, some essential boundary conditions \citep{Brezis99} should be specified in relation with the choice of $\mathcal{F}$. In the simpler context of functions defined over planar domains, and with the penalizing terms involving linear second order elliptic operators, the well-posedness of problem (\ref{equ::functional}) is extensively discussed in \citet{Azzimonti1} under different kind of boundary conditions (Dirichlet, Neumann, or Robin (see, e.g., \citet{Brezis99}). In the following, we prove the well-posedness of the estimation problem (\ref{equ::min_functional}) in the particular case of homogeneous boundary conditions using the Lax-Milgram theorem, following \citet{Ramsey02} and \citet{Sangalli13}.
Let $H^2_0(\Sigma)$ be defined as:
\begin{equation}
\label{equ::H20}
H^2_0(\Sigma):= \left\{v \in H^2(\Sigma)\ : \ \nabla_\Sigma v \cdot \mathbf{n} = 0 \text{ and }  v= 0  \text{ on } \partial\Sigma \right\},
\end{equation}
where $\mathbf{n}$ denotes the outward directed unit vector normal to $\partial\Sigma$, the boundary of $\Sigma$. Then, one can characterize the solution of the minimization problem (\ref{equ::min_functional}) and ensure the existence and the uniqueness of the solution in the case where $\hat{f}$ is assumed to lie in $\mathcal{F}= H^2_0(\Sigma)$.
With this aim, we recall the Lax-Milgram Theorem (see, e.g., \citet{Quarteroni94}): 
\begin{theorem}[Lax-Milgram]
Let $\mathcal{F}$ be a Hilbert space, $G(\cdot,\cdot):\mathcal{F}\times\mathcal{F}\rightarrow\mathbb{R}$ a continuous and coercive bilinear form and $F:\mathcal{F}\rightarrow\mathbb{R}$ a linear and continuous functional. Then, there exists a unique solution of the following problem:$$ \text{find }u\in \mathcal{F} \text{ :  }G(v,u)=F(v),\quad \forall v\in \mathcal{F}.$$
Moreover, if $G(\cdot,\cdot)$ is symmetric, then $u \in \mathcal{F}$ is the unique minimizer in $\mathcal{F}$ of the functional $J:\mathcal{F}\rightarrow \mathbb{R}$, defined as
$$J(v)=G(v,v)-2F(v).$$
\end{theorem}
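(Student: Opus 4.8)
The plan is to reduce the abstract variational equation to a single operator equation in $\mathcal{F}$, solved via the Riesz representation theorem, and then to treat the symmetric case as a minimization identity. Write $M>0$ for the continuity constant, $|G(v,w)|\le M\norm{v}\norm{w}$, and $\alpha>0$ for the coercivity constant, $G(v,v)\ge\alpha\norm{v}^2$. For each fixed $u\in\mathcal{F}$ the map $v\mapsto G(v,u)$ is a bounded linear functional on $\mathcal{F}$, so by Riesz there is a unique element $Au\in\mathcal{F}$ with $\scal{v}{Au}=G(v,u)$ for all $v\in\mathcal{F}$; bilinearity and uniqueness show $A:\mathcal{F}\to\mathcal{F}$ is linear, and $\norm{Au}\le M\norm{u}$ shows it is bounded. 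Likewise $F(v)=\scal{v}{b}$ for a unique $b\in\mathcal{F}$. Hence the problem ``find $u$ with $G(v,u)=F(v)\ \forall v$'' is equivalent to solving $Au=b$ in $\mathcal{F}$.

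Next I would prove that $A$ is a bijection. Coercivity gives $\scal{u}{Au}=G(u,u)\ge\alpha\norm{u}^2$, and Cauchy--Schwarz then yields $\norm{Au}\ge\alpha\norm{u}$ for all $u$; this bound forces $A$ to be injective with closed range. If some $w$ were orthogonal to $\mathrm{Range}(A)$, then in particular $\scal{w}{Aw}=G(w,w)=0$, so $\alpha\norm{w}^2\le 0$ and $w=0$; thus $\mathrm{Range}(A)$ is dense, and being closed it is all of $\mathcal{F}$. Therefore $u:=A^{-1}b$ is the unique solution. (An equivalent route avoiding the closed-range argument: for any $\rho\in(0,2\alpha/M^2)$ the affine map $v\mapsto v-\rho(Av-b)$ is a contraction on $\mathcal{F}$, since $\norm{z-\rho Az}^2\le(1-2\rho\alpha+\rho^2M^2)\norm{z}^2$, so the Banach fixed-point theorem supplies a unique fixed point, which solves $Au=b$.)

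For the symmetric case, $G(\cdot,\cdot)$ is itself a symmetric positive-definite bilinear form, hence an inner product on $\mathcal{F}$; by continuity and coercivity the induced norm $G(v,v)^{1/2}$ is equivalent to $\norm{\cdot}$, so $(\mathcal{F},G)$ is again a Hilbert space on which $F$ stays continuous, and existence and uniqueness of $u$ with $G(v,u)=F(v)$ is just the Riesz representation theorem in this renormed space (consistent with the solution found above). For the minimization claim, substitute $F(v)=G(v,u)$ and use symmetry to complete the square:
\[
J(v)=G(v,v)-2G(v,u)=G(v-u,\,v-u)-G(u,u)\ \ge\ -G(u,u)=J(u),
\]
with equality exactly when $G(v-u,v-u)=0$, i.e.\ $v=u$; so $u$ is the unique minimizer of $J$ over $\mathcal{F}$.

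The main obstacle is the surjectivity of $A$ in the non-symmetric case — equivalently, fixing the admissible range of the relaxation parameter $\rho$ so that $v\mapsto v-\rho(Av-b)$ contracts — since the remaining ingredients (the Riesz identifications, the a priori bound from coercivity, and the completion-of-squares identity) are routine.
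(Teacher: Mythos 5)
Your argument is correct, but note that the paper does not actually prove this theorem: it is stated as a recalled classical result with a citation to a standard reference, and its role in the paper is only as an ingredient for Lemma 3.1 and Proposition 3.1. So there is no proof in the paper to compare against; what you have written is the standard textbook proof, and it is sound. The Riesz identification $\scal{v}{Au}=G(v,u)$, the a priori bound $\norm{Au}\ge\alpha\norm{u}$ giving injectivity and closed range, the orthogonality argument for density of the range (or, equivalently, the contraction $v\mapsto v-\rho(Av-b)$ for $\rho\in(0,2\alpha/M^2)$), and the completion-of-squares identity $J(v)=G(v-u,v-u)-G(u,u)$ in the symmetric case are all correctly executed; the last step also correctly uses coercivity to conclude that $G(v-u,v-u)=0$ forces $v=u$, so the minimizer is unique. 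The only remark worth making is that your self-identified ``main obstacle'' (surjectivity of $A$) is precisely the point the two routes you sketch both resolve, so the proof is complete as written.
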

\begin{lemma}
\label{lem::bilin_form}
Let $v,u\in \mathcal{F}= H^2_0(\Sigma)$ be two functions and $\lambda>0$ a positive smoothing parameter. Let $G(\cdot,\cdot)$ and $F(\cdot)$ be defined as:
\begin{equation}
\label{equ::def_lin_bilin_forms}
G(v,u)= \scal{\mathbf{v}_N}{\mathbf{u}_N}+\lambda \int_\Sigma (\Delta_\Sigma v)(\Delta_\Sigma u) \ d\Sigma, \qquad \text{and}\qquad F(v)=\scal{\mathbf{y}}{\mathbf{v}_N},
\end{equation}
where $\mathbf{v}_N = \left(v(\mathbf{p}_1),\dots,v(\mathbf{p}_N)\right)^T$ and $\mathbf{u}_N = \left(u(\mathbf{p}_1),\dots,u(\mathbf{p}_N)\right)^T$ for some distinct points $\mathbf{p}_1,\dots,\mathbf{p}_N$ on $\Sigma$. Then, the bilinear form $G(\cdot,\cdot)$ is coercive, continuous and symmetric and $F(\cdot)$ is linear and continuous in $H^2_0(\Sigma)$.
\end{lemma}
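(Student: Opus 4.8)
The plan is to verify the four properties one at a time, reducing everything to two ingredients: the Sobolev embedding $H^2(\Sigma)\hookrightarrow C^0(\Sigma)$, which holds because $\Sigma$ is two--dimensional and which makes the point evaluations $v\mapsto v(\mathbf{p}_i)$ well defined and continuous; and an a priori estimate for $\Delta_\Sigma$ on $H^2_0(\Sigma)$, which carries all the real content. For bilinearity, the map $v\mapsto\mathbf{v}_N=(v(\mathbf{p}_1),\dots,v(\mathbf{p}_N))^T$ is linear on $H^2(\Sigma)$ (its values being meaningful by the embedding above), $\Delta_\Sigma$ is linear, and $(v,u)\mapsto\int_\Sigma(\Delta_\Sigma v)(\Delta_\Sigma u)\,d\Sigma$ is bilinear; hence $G(\cdot,\cdot)$ is bilinear. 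It is symmetric because both the Euclidean inner product on $\mathbb{R}^N$ and that integral are symmetric in their arguments, and $F(v)=\scal{\mathbf{y}}{\mathbf{v}_N}$ is linear as the composition of $v\mapsto\mathbf{v}_N$ with a fixed linear form on $\mathbb{R}^N$.

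\textbf{Continuity.} Let $C_S$ denote the norm of the embedding $H^2(\Sigma)\hookrightarrow C^0(\Sigma)$, so that $\norm{\mathbf{v}_N}_2^2=\sum_{i=1}^N v(\mathbf{p}_i)^2\le N\,C_S^2\,\norm{v}_{H^2(\Sigma)}^2$. Expressing $\Delta_\Sigma$ in the parametric domain via (\ref{equ::diff_operators}) and using that $X$ is smooth and $\mathbf{G}$ is invertible almost everywhere — so that $g$, $\mathbf{G}^{-1}$ and their first derivatives are bounded on $\Omega$ — one obtains $\norm{\Delta_\Sigma v}_{L^2(\Sigma)}\le C_\Delta\,\norm{v}_{H^2(\Sigma)}$. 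Two applications of the Cauchy--Schwarz inequality, in $\mathbb{R}^N$ for the first term of $G$ and in $L^2(\Sigma)$ for the second, then give $|G(v,u)|\le(N C_S^2+\lambda C_\Delta^2)\,\norm{v}_{H^2(\Sigma)}\norm{u}_{H^2(\Sigma)}$, and likewise $|F(v)|\le\sqrt{N}\,C_S\,\norm{\mathbf{y}}_2\,\norm{v}_{H^2(\Sigma)}$, which proves continuity of both $G$ and $F$.

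\textbf{Coercivity.} Since the first term of $G(v,v)$ is nonnegative, $G(v,v)\ge\lambda\,\norm{\Delta_\Sigma v}_{L^2(\Sigma)}^2$, so it suffices to establish the a priori bound
\[
\norm{v}_{H^2(\Sigma)}\le C\,\norm{\Delta_\Sigma v}_{L^2(\Sigma)}\qquad\text{for all }v\in H^2_0(\Sigma).
\]
I would prove this in two steps. First, Green's formula on the surface, $\int_\Sigma(\Delta_\Sigma v)\,v\,d\Sigma=-\int_\Sigma|\nabla_\Sigma v|^2\,d\Sigma+\int_{\partial\Sigma}(\nabla_\Sigma v\cdot\mathbf{n})\,v\,ds$, has vanishing boundary integral because $v=0$ (and $\nabla_\Sigma v\cdot\mathbf{n}=0$) on $\partial\Sigma$ by the definition (\ref{equ::H20}) of $H^2_0(\Sigma)$; combined with a Poincaré inequality on $\Sigma$ for functions vanishing on $\partial\Sigma$, this yields $\norm{\nabla_\Sigma v}_{L^2(\Sigma)}\le C_P\norm{\Delta_\Sigma v}_{L^2(\Sigma)}$ and hence $\norm{v}_{H^1(\Sigma)}\le C'\norm{\Delta_\Sigma v}_{L^2(\Sigma)}$. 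Second, elliptic regularity for $\Delta_\Sigma$ — read through $X$ as a uniformly elliptic second--order operator in divergence form on $\Omega$ with smooth, nondegenerate coefficients — gives $\norm{v}_{H^2(\Sigma)}\le C''\big(\norm{\Delta_\Sigma v}_{L^2(\Sigma)}+\norm{v}_{L^2(\Sigma)}\big)$. Combining the two estimates proves the bound, with coercivity constant $\alpha=\lambda/C^2$.

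\textbf{Main obstacle.} Everything except coercivity is routine. The substantive step is the a priori estimate above: it is where the homogeneous boundary conditions built into $H^2_0(\Sigma)$ and the invertibility of the metric tensor are genuinely used, and it rests on a Poincaré inequality and on $H^2$--elliptic regularity on the surface — equivalently, on sufficient regularity of $\partial\Sigma$ and of the NURBS mapping $X$, in fact a bit more than the $C^1(\Omega)$ assumed so far, so that $\Delta_\Sigma$ maps $H^2(\Sigma)$ boundedly into $L^2(\Sigma)$. These facts I would either quote from the theory of elliptic problems on manifolds or transport to the flat domain $\Omega$ via $X$ and deduce from the classical results in \citet{Quarteroni94} and \citet{Brezis99}.
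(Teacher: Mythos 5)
Your proof is correct and follows the same skeleton as the paper's: coercivity is obtained by dropping the nonnegative data term and invoking the equivalence of $\norm{\Delta_\Sigma \cdot}_{L^2(\Sigma)}$ with the full $H^2(\Sigma)$-norm on $H^2_0(\Sigma)$, while continuity of $G$ and $F$ follows from the embedding $H^2(\Sigma)\hookrightarrow C^0(\Sigma)$ and Cauchy--Schwarz. The one genuine difference is how the key norm-equivalence is handled: the paper simply cites it (attributing it to a standard reference), whereas you sketch a proof --- Green's identity plus Poincar\'e to control the $H^1(\Sigma)$-norm, then $H^2$ elliptic regularity for $\Delta_\Sigma$ read through the parametrization to upgrade to the full norm. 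That is the right two-step argument and it correctly identifies where the boundary conditions in (\ref{equ::H20}) and the nondegeneracy of the metric tensor enter; it also makes visible a hypothesis the paper leaves implicit, namely that the elliptic-regularity step needs more smoothness of $X$ and of $\partial\Sigma$ than the stated $C^1(\Omega)$ (and, strictly, some care at the corners of the rectangular parametric domain). So your version is self-contained where the paper's is by citation, at the cost of importing regularity assumptions; for the continuity estimates the two arguments coincide, with your explicit bound $\norm{\Delta_\Sigma v}_{L^2(\Sigma)}\le C_\Delta\norm{v}_{H^2(\Sigma)}$ via (\ref{equ::diff_operators}) filling in a step the paper uses tacitly.
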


\begin{proof}[Proof]
First, we note that $|\cdot|_{H^2(\Sigma)}$, defined as $|v|_{H^2(\Sigma)}= \int_\Sigma (\Delta_\Sigma v)^2 \ d\Sigma$, is equivalent to the norm $\|\cdot\|_{H^2(\Sigma)}$ in $H^2_{0}(\Sigma)$. This implies that there exists a constant $C_{0,\Omega}>0$ such that $ |v|_{H^2(\Sigma)}\geq C_{0,\Omega} \|v\|_{H^2(\Sigma)}, \ \forall v\in H^2_0(\Sigma)$ (see, e.g., \citet{Quarteroni09}). Then, we have:
$$\displaystyle G(v,v) = \scal{\mathbf{v}_N}{\mathbf{v}_N} + \lambda\int_{\Sigma} (\Delta_\Sigma v)^2\ d\Sigma =\|\mathbf{v}_N\|_2^2 + \lambda |v|^2_{H^2(\Sigma)} \geq\lambda C_{0,\Omega} \|v\|^2_{H^2(\Sigma)}, \quad \forall v \in H^2_{0}(\Sigma),
$$
and so $G(\cdot,\cdot)$ is coercive. We now show the continuity of $G(\cdot,\cdot)$. Since $H^2(\Sigma)\subset C^0(\Sigma)$, there exists a constant $C_{1,\Omega, N }$ such that $ \|\mathbf{v}_N\|_2 \leq C_{1,\Omega,N} \| v\|_{H^2(\Sigma)}, \ \forall v \in H^2(\Sigma)$. We have:
$$
\renewcommand{\arraystretch}{2}
\begin{array}{lll}
\displaystyle G(v,u)&=&\displaystyle \scal{\mathbf{v}_N}{\mathbf{u}_N} + \lambda\int_{\Sigma} (\Delta_\Sigma v)(\Delta_\Sigma u) \leq \|\mathbf{v}_N\|_{2}\|\mathbf{u}_N\|_{2}+\lambda |v|_{H^2(\Sigma)}  |u|_{H^2(\Sigma)} \\
&\leq & C_{1,\Omega,N}^2 \|v\|_{H^2(\Sigma)}  \|u\|_{H^2(\Sigma)}  + \lambda |v|_{H^2(\Sigma)}  |u|_{H^2(\Sigma)} \\
&\leq &  \max\{C_{1,\Omega,N}^2, \lambda\} \|v\|_{H^2(\Sigma)} \|u\|_{H^2(\Sigma)}, \quad \forall  v,u \in H^2_{0}(\Sigma).\\
\end{array}
$$
Then, the bilinear form $G(\cdot,\cdot)$ is also continuous and its symmetry is obvious.

Finally, we have:
$$|F(v)|= |\scal{\mathbf{y}}{\mathbf{v}_N}|\leq \|\mathbf{y}\|_{2} \|\mathbf{v}_N\|_2 \leq C_{1,\Omega,N} \|\mathbf{y}\|_{2} \|v\|_{H^2(\Sigma)},\quad \forall v \in H^2_{0}(\Sigma),$$
which proves the continuity of the linear form $F$ and concludes the proof.  
\end{proof}
\begin{proposition} \label{prop::exit_uniq}
Let $\mathcal{F}= H^2_0(\Sigma)$ and $\lambda>0$. Then, the solution of problem (\ref{equ::min_functional}) exists and is unique. Moreover, problem (\ref{equ::min_functional}) is equivalent to:
\begin{equation}
\label{equ::caract_sol_inf_dim}
\text{find $\hat{f} \in \mathcal{F}$ : }\scal{\mathbf{v}_N}{\mathbf{\hat{f}}_N} +\lambda \int_\Sigma \Delta_\Sigma v\, \Delta_\Sigma \hat{f}\ d\Sigma  = \scal{\mathbf{y}}{\mathbf{v}_N},\quad \forall v \in \mathcal{F}.
\end{equation}
\end{proposition}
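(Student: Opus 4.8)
The plan is to recognize problem (\ref{equ::min_functional}) as a particular instance of the variational problem handled by the Lax--Milgram theorem, with the forms $G(\cdot,\cdot)$ and $F(\cdot)$ introduced in Lemma~\ref{lem::bilin_form}, so that existence, uniqueness, and the weak characterization all follow at once.

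First I would rewrite the functional $J_\lambda$ of (\ref{equ::functional}) by expanding the least-square term: for every $v\in\mathcal{F}$,
\begin{equation*}
J_\lambda(v)=\norm{\mathbf{y}-\mathbf{v}_N}_2^2+\lambda\int_\Sigma(\Delta_\Sigma v)^2\,d\Sigma
=\norm{\mathbf{y}}_2^2-2\scal{\mathbf{y}}{\mathbf{v}_N}+\scal{\mathbf{v}_N}{\mathbf{v}_N}+\lambda\int_\Sigma(\Delta_\Sigma v)^2\,d\Sigma,
\end{equation*}
so that $J_\lambda(v)=G(v,v)-2F(v)+\norm{\mathbf{y}}_2^2$. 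Since $\norm{\mathbf{y}}_2^2$ does not depend on $v$, minimizing $J_\lambda$ over $\mathcal{F}$ is equivalent to minimizing $J(v):=G(v,v)-2F(v)$ over $\mathcal{F}$, which is exactly the functional appearing in the statement of the Lax--Milgram theorem.

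Next I would verify the hypotheses of that theorem. The space $\mathcal{F}=H^2_0(\Sigma)$ is a Hilbert space, being a closed subspace of $H^2(\Sigma)$ (the conditions $v=0$ and $\nabla_\Sigma v\cdot\mathbf{n}=0$ on $\partial\Sigma$ cut out the kernel of continuous trace operators, hence a closed set), equipped with the $H^2(\Sigma)$ inner product; note also that $H^2(\Sigma)\subset C^0(\Sigma)$ makes $\mathbf{v}_N$ well defined. By Lemma~\ref{lem::bilin_form}, $G(\cdot,\cdot)$ is a continuous, coercive, symmetric bilinear form on $\mathcal{F}\times\mathcal{F}$ and $F(\cdot)$ is a continuous linear functional on $\mathcal{F}$. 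Applying Lax--Milgram then gives a unique $\hat{f}\in\mathcal{F}$ with $G(v,\hat{f})=F(v)$ for all $v\in\mathcal{F}$, and, $G$ being symmetric, this $\hat{f}$ is the unique minimizer in $\mathcal{F}$ of $J$, hence of $J_\lambda$, which settles (\ref{equ::min_functional}). Substituting the definitions (\ref{equ::def_lin_bilin_forms}) into $G(v,\hat{f})=F(v)$ yields precisely the weak formulation (\ref{equ::caract_sol_inf_dim}).

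I expect the only genuinely delicate point to be the justification that $H^2_0(\Sigma)$ is a Hilbert space and, relatedly, the equivalence of the seminorm $|\cdot|_{H^2(\Sigma)}$ with $\norm{\cdot}_{H^2(\Sigma)}$ on $H^2_0(\Sigma)$ that Lemma~\ref{lem::bilin_form} invokes, i.e. a Poincar\'e-type inequality on the manifold $\Sigma$; everything else is a direct transcription of Lax--Milgram. As these facts hold because $\Sigma$ is compact with a sufficiently smooth NURBS parametrization and are available in the literature, I would simply cite them and keep the argument short.
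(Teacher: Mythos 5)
Your proposal is correct and follows essentially the same route as the paper's own proof: expand the least-square term to write $J_\lambda(v)=G(v,v)-2F(v)+\norm{\mathbf{y}}_2^2$, discard the constant, and apply the Lax--Milgram theorem together with Lemma~\ref{lem::bilin_form}, with symmetry of $G$ giving the equivalence with the weak formulation (\ref{equ::caract_sol_inf_dim}). Your additional remarks on the closedness of $H^2_0(\Sigma)$ and the Poincar\'e-type inequality are sound but are already absorbed into the paper's Lemma~\ref{lem::bilin_form} and its cited references.
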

\begin{proof}[Proof]
The functional $J_\lambda(v)$ (\ref{equ::functional}) can be rewritten as:
$$J_\lambda(v)=\norm{\mathbf{y}-\mathbf{v}_N}_2^2+\lambda \int_\Sigma (\Delta_\Sigma f)^2 \ d\Sigma= \norm{\mathbf{y}}_2^2-2\,\scal{\mathbf{y}}{\mathbf{v}_N}+\norm{\mathbf{v}_N}_2^2 +\lambda\int_\Sigma (\Delta_\Sigma v)^2 \ d\Sigma.$$
By defining $\tilde{J}_\lambda(v)$ as
$$\tilde{J}_\lambda(v)=\norm{\mathbf{v}_N}_2^2+\lambda \int_\Sigma (\Delta_\Sigma v)^2\ d\Sigma -2\,\scal{\mathbf{y}}{\mathbf{v}_N},$$
we have that $$\argmin{v\in \mathcal{F}} J_\lambda(v) \equiv \argmin{v\in \mathcal{F}} \tilde{J}_\lambda(v).$$
From the definitions of $G(v,u)$ and $F(v)$  of (\ref{equ::def_lin_bilin_forms}), the functional $\tilde{J}_\lambda(v)$
can be written as:
$$\tilde{J}_\lambda(v)=G(v,v)-2F(v).$$
Thanks to the Lax-Milgram Theorem, it is then sufficient to use Lemma \ref{lem::bilin_form} to establish the well-posedness of problem (\ref{equ::min_functional}). Moreover, since $G(\cdot, \cdot)$ is symmetric, problem (\ref{equ::min_functional}) is equivalent to problem (\ref{equ::caract_sol_inf_dim}).
\end{proof}
Setting a priori the essential boundary conditions $ \nabla_\Sigma \hat{f} \cdot \mathbf{n} = 0 $ and $  \hat{f}= 0  \text{ on } \partial\Sigma$ following (\ref{equ::H20}) may be an inadequate choice in several applications, especially when the behaviour of the function $f$ at the boundaries is not known a priori. In such cases, it may be more convenient to consider instead \emph{natural boundary conditions}, that is:
\begin{equation}
\label{equ:nat_bound_cond}
\left\{
\renewcommand{\arraystretch}{1.5}
\begin{array}{ll}
\nabla_\Sigma(\Delta_\Sigma \ \hat{f})\cdot \mathbf{n} = 0  & \text{ on }\partial\Sigma ,\\
\Delta_\Sigma \ \hat{f}  = 0  & \text{ on }\partial\Sigma.
\end{array}\right. 
\end{equation}
In the case that the boundary conditions (\ref{equ:nat_bound_cond}) are set, we are unable to show the well-posedness of problem (\ref{equ::caract_sol_inf_dim})  with $\mathcal{F}= H^2(\Sigma)$. Nevertheless, numerical experience indicates that it still yields a numerically stable problem. We can also note that the usual planar smoothers, such as TPS, also implicitly use natural boundary conditions.
\subsection{Numerical approximation: IGS}
Let $\hat{f}^h$ be a finite dimensional approximation of $\hat{f}$ obtained by means of IGS.
Let $\left\{\psi_1,\dots,\psi_{N^h}\right\}$ be a basis of a discrete function space $\mathcal{F}^h\subset\mathcal{F}\subseteq H^2(\Sigma)$ of dimension $N^h$. In the finite dimensional space $\mathcal{F}^h$, problem (\ref{equ::caract_sol_inf_dim}) reads:
\begin{equation}
\label{equ::finite_dim_prob}
\text{find } \hat{f}^h\in \mathcal{F}^h \text{ : }  \scal{\mathbf{v}^h_N}{\mathbf{\hat{f}}^h_N} +\lambda \int_\Sigma \Delta_\Sigma v^h\, \Delta_\Sigma \hat{f}^h \ d\Sigma= \scal{\mathbf{y}}{\mathbf{v}^h_N},\qquad \forall v^h \in \mathcal{F}^h,
\end{equation}
where $\mathbf{v}_N^h:=(v^h(\mathbf{p}_1),\dots,v^h(\mathbf{p}_N))^T$ and $\mathbf{\hat{f}}_N^h:=(\hat{f}^h(\mathbf{p}_1),\dots,\hat{f}^h(\mathbf{p}_N))^T$. Since $\mathcal{F}^h$ is finite dimensional, problem (\ref{equ::finite_dim_prob}) is equivalent to:
\begin{equation}
\label{equ::finite_dim_weak_form}
\text{find } \hat{f}^h \text{ : } \scal{\boldsymbol{\psi_i}_N}{\mathbf{\hat{f}}^h_N} + \lambda \int_\Sigma \Delta_\Sigma \psi_i\, \Delta_\Sigma \hat{f}^h\ d\Sigma = \scal{\mathbf{y}}{\boldsymbol{\psi_i}_N},\qquad \forall i=1\dots,{N^h},
\end{equation}
where $\boldsymbol{\psi_i}_N:=(\psi_i(\mathbf{p}_1),\dots,\psi_i(\mathbf{p}_N))^T$. Let us define the ${N^h}\times {N^h}$ matrix $\mathbf{R}$ as\\ $(\mathbf{R})_{ij}=\int_\Sigma \Delta_\Sigma \psi_i \, \Delta_\Sigma \psi_j\ d\Sigma,$
and the $N\times {N^h}$ matrix $\boldsymbol{\Psi}$ as $(\boldsymbol{\Psi})_{ij}= \psi_j(\mathbf{p}_i).$
Since $\hat{f}^h$ belongs to $\mathcal{F}^h$, it can be written as a linear combination of the basis functions:
$$\hat{f}^h(\mathbf{x})=\sum_{i=1}^{N^h} \hat{f}_i\  \psi_i(\mathbf{x}), \quad \forall \mathbf{x}\in \Sigma,$$
or compactly as
$\hat{f}^h(\mathbf{x})=\boldsymbol{\psi}^T(\mathbf{x})\mathbf{\hat{f}}^h,$
where $\mathbf{\hat{f}}^h:=(\hat{f}_1,\dots,\hat{f}_{N^h})^T$ and \\ $\boldsymbol{\psi}(\mathbf{x})=(\psi_1(\mathbf{x}),\dots,\psi_{N^h}(\mathbf{x}))^T$. Then, we have:
$$\mathbf{\hat{y}}:=\mathbf{\hat{f}}_N^h=\boldsymbol{\Psi}\mathbf{\hat{f}}^h.$$
Problem (\ref{equ::finite_dim_weak_form}) in matrix form reads as:
\begin{equation}
\label{equ::problem_mat_form}
\text{find }\mathbf{\hat{f}}^h\in \mathbb{R}^{N^h} \text{:} \qquad \mathbf{A}\mathbf{\hat{f}}^h=\boldsymbol{\Psi}^T \mathbf{y}, \nonumber
\end{equation}
where $\mathbf{A}:=(\boldsymbol{\Psi}^T\boldsymbol{\Psi}+\lambda\mathbf{R})$. Then, the explicit form of $\mathbf{\hat{f}}^h$ is given by:
\begin{equation}
\label{equ::fin_system}
\mathbf{\hat{f}}^h = \mathbf{A}^{-1}\boldsymbol{\Psi}^T  \mathbf{y} = (\boldsymbol{\Psi}^T\boldsymbol{\Psi}+\lambda\mathbf{R})^{-1}\boldsymbol{\Psi}^T \mathbf{y}
\end{equation}
We see from (\ref{equ::fin_system}) that the estimator $\hat{\mathbf{f}}^h$ has the typical form of a penalized least-square estimator. Since $\mathbf{\hat{y}}= \boldsymbol{\Psi}\mathbf{\hat{f}}^h,$ we finally get the evaluation of the function $\hat{f}^h$ in the points $\{\mathbf{p}_1,\dots, \mathbf{p}_N\}$ as:
\begin{equation}
\label{equ::y_hat}
\mathbf{\hat{y}}= \boldsymbol{\Psi}(\boldsymbol{\Psi}^T\boldsymbol{\Psi}+\lambda\mathbf{R})^{-1}\boldsymbol{\Psi}^T \mathbf{y}.
\end{equation}
The smoothing matrix, that maps the observed data values $\mathbf{y}$ in the fitted data values $\hat{\mathbf{y}}$, is given by:
\begin{equation}
\label{equ:hat_matrix}
\mathbf{S}= \boldsymbol{\Psi}(\boldsymbol{\Psi}^T\boldsymbol{\Psi}+\lambda\mathbf{R})^{-1}\boldsymbol{\Psi}^T.
\end{equation}
The trace of $\mathbf{S}$ is a  measure of the equivalent degrees of freedom of the estimator (see \citet{Buja89}). If $\lambda=0$, the number of degrees of freedom is equivalent to the number of basis functions $N^h$. 
\dblue{
However, the two notions differ for $\lambda>0$. While different definitions of equivalent degrees of freedom can be considered, these can be assumed as a consistent measure  of the number of degrees of freedom that takes into account the harmonic penalization. In this respect, if the smoothing parameter $\lambda$ is strictly positive, the number of equivalent degrees of freedom is smaller than the number of basis functions $N^h$ used in $\mathcal{F}^h$.}

\dblue{We use IGA to solve the minimization problem (\ref{equ::min_functional}), for which we define:
$$\mathcal{F}^h = \text{span}\left\{R_{i, p}\circ X^{-1}(\xi, \eta),\quad i=1,\dots, N^h\right\},$$
where $R_{i, p}$ are the NURBS basis functions used to build $\Sigma$,}  eventually after the application of some $h-$, $p-$ or $k-$refinement procedure, as described in Section \ref{sec::NURBS}. $N^h$ is the number of basis functions, which is the dimension of $\mathcal{F}^h$. NURBS allow to define basis functions which are globally $C^1-$continuous on $\Sigma$. As consequence, one can approximate problem (\ref{equ::min_functional}) with the standard Galerkin method, since $\mathcal{F}^h\subset H^2(\Sigma)$; see \citet{Bartezzaghi15} and \citet{Dede2}. In this manner, we obtain a method, which we name IGS, allowing to perform smoothing on surfaces by means of NURBS-based IGA. This also allows encapsulating the original description of the geometry of the surface in the analysis.

The smoothing parameter $\lambda$ may be chosen by minimization of a generalized cross-validation criterion
(GCV), defined as:
\begin{equation}
\label{eq::def_GCV}
\text{GCV}(\lambda)= \frac{N}{\left[N- \trace\left(\mathbf{S}(\lambda)\right)\right]^2}\ \|\mathbf{\hat{y}}(\lambda)-\mathbf{y}\|^2;
\end{equation}
see \citet{Craven78}. Here, we use $\trace\left(\mathbf{S}(\lambda)\right)$ as a measure of the equivalent degrees of freedom (EDF) of the model \citep{Buja89}. In order to solve the optimization problem corresponding to GCV minimization, we use a BFGS quasi-Newton method (see, e.g., \citet{Nocedal99}) with a sufficiently small tolerance. \dblue{One can observe that the computation of the GCV criterion involves the inversion of the matrix $\boldsymbol{\Psi}^T\boldsymbol{\Psi}+\lambda\mathbf{R}$ of size $N^h \times N^h$. In our implementation we use a direct method to compute the matrix $\mathbf{S}$ because of the moderate size of this matrix. Methods based on matrix decompositions can improve both the efficiency and the stability of the optimization procedure used for the GCV criterion, \citep[see, e.g.,][pp. 178--181]{Wood06}.} Other methods for the choice of the smoothing parameter are also available, e.g. the restricted maximum likelihood estimation \citep{Wood11}.

\dblue{
We remark that our model only considers a deterministic location of the measurement points, according to (\ref{equ::math_mod}). While it is easy to account for random location of points at the implementation level of the IGS method, the uncertainty quantification in this setting is not straightforward.}
\subsection{Distributional properties and quantification of uncertainty}
\dblue{Here, we denote by $\mathbb{E}[\mathbf{y}]$ and $ \text{var}(\mathbf{y})$ the expectation and the variance of $\mathbf{y}$ respectively. Moreover, let $\sigma^2$ be the constant variance of the noise introduced in (\ref{equ::math_mod}).}
For a given smoothing parameter $\lambda$, the estimate $\mathbf{\hat{f}}^h$ is a linear transformation of the observations $\mathbf{y}$, as shown in (\ref{equ::fin_system}). Moreover, we have $\mathbb{E}[\mathbf{y}] = \mathbf{f}_N = (f(\mathbf{p}_1),\dots, f(\mathbf{p}_n))^T$ and $\text{var}(\mathbf{y}) = \sigma^2\mathbf{I}$. Then, from (\ref{equ::fin_system}), we get:
$$\mathbb{E}[\mathbf{\hat{f}}^h]=  \mathbb{E}[(\boldsymbol{\Psi}^T\boldsymbol{\Psi}+\lambda\mathbf{R})^{-1}\boldsymbol{\Psi}^T \mathbf{y}] = \mathbf{A}^{-1}\boldsymbol{\Psi}^T \mathbf{f}_N.$$
Similarly, we can directly express the variance as:
\begin{equation}
\label{equ::var_fh}
 \text{var}(\mathbf{\hat{f}}^h) =  \text{var}(\mathbf{A}^{-1}\boldsymbol{\Psi}^T \mathbf{y}) 
= \sigma^2 \mathbf{A}^{-1}\boldsymbol{\Psi}^T\boldsymbol{\Psi}\mathbf{A}^{-1},
\end{equation}
where we used the fact that the matrix $\mathbf{A}$ is symmetric and hence $\mathbf{A}^{-T} = \mathbf{A}^{-1}$.  In particular, under the assumption of normality of the errors, we have:
$$\mathbf{\hat{f}}^h \sim \mathcal{N}(\mathbf{A}^{-1}\boldsymbol{\Psi}^T \mathbf{f}_N,\sigma^2 \mathbf{A}^{-1}\boldsymbol{\Psi}^T\boldsymbol{\Psi}\mathbf{A}^{-1} ).$$
Then, we can also express the expectation and the variance of the fitted values $\mathbf{\hat{y}}$ explicitly as:
$$\mathbb{E}\left[\mathbf{\hat{y}}\right] = \mathbf{S}\ \mathbf{f}_N,$$
and:
$$\text{var}(\mathbf{\hat{y}})=\sigma^2 \mathbf{S} \mathbf{S}^T = \sigma^2\ \mathbf{S}^2,$$
respectively, since the smoothing matrix $\mathbf{S}$ is symmetric. In practice, the error variance $\sigma^2$ must be estimated from the data. Following \citet{Hastie90} we can estimate $\sigma^2$ by:
\begin{equation}
\label{equ:est_var}
\hat{\sigma}^2 = \frac{\|\mathbf{\hat{y}}- \mathbf{y}\|^2}{N  - \trace{(\mathbf{S})}}.
\end{equation}
Given an additional point $\mathbf{p}_{N+1}$ on $\Sigma$, the predicted value of the function $f$ is given by:
$$\hat{f}^h(\mathbf{p}_{N+1}) = \sum_{i=1}^{N^h} \hat{f}_i\ \psi_i(\mathbf{p}_{N+1}) = \boldsymbol{\psi}(\mathbf{p}_{N+1})^T \mathbf{\hat{f}}^h,$$
while \dblue{its} variance is given by
$$ \text{var}(\hat{f}^h(\mathbf{p}_{N+1})) = \boldsymbol{\psi}(\mathbf{p}_{N+1})^T \text{var}(\mathbf{\hat{f}}^h) \boldsymbol{\psi}(\mathbf{p}_{N+1}) = \sigma^2 \boldsymbol{\psi}(\mathbf{p}_{N+1})^T  \mathbf{A}^{-1}\boldsymbol{\Psi}^T\boldsymbol{\Psi}\mathbf{A}^{-1}\boldsymbol{\psi}(\mathbf{p}_{N+1}).$$
An estimate of $\text{var}(\hat{f}^h(\mathbf{p}_{N+1}))$, say $\widehat{\text{var}}(\hat{f}^h(\mathbf{p}_{N+1})$, reads:
$$ \widehat{\text{var}}(\hat{f}^h(\mathbf{p}_{N+1}) = \hat{\sigma}^2 \boldsymbol{\psi}(\mathbf{p}_{N+1})^T  \mathbf{A}^{-1}\boldsymbol{\Psi}^T\boldsymbol{\Psi}\mathbf{A}^{-1}\boldsymbol{\psi}(\mathbf{p}_{N+1}).$$

These results fully characterize the estimates in the case of Gaussian noise. In such case, one can derive confidence bands on the estimated functions
and thus quantifying the uncertainty of the estimations of any predicted value. If the Gaussian assumption does not hold, this confidence interval should be used with caution and the confidence level is only approximated. More generally, the quantification of uncertainty has been widely studied in the context of generalized additive models (see, e.g., \citet{Wood06} and references therein). A Bayesian approach to uncertainty quantification for this class of models is also possible; see \citet{Marra12}.
\section{Numerical simulations}
\label{sec:num}
In order to assess the IGS methodology, we consider two simulations on surfaces represented by NURBS for which the function $f$ to be estimated is given a priori.
We aim at showing different properties of IGS in different settings.  In the first simulation, the configuration is such that any method used for planar smoothing such as Thin Plate Splines (TPS) \citep{Duchon77, Wahba90} can be efficiently used and thus compared to IGS. In the second simulation, the setting is such that methods for two dimensional smoothing are not appropriate, while IGS can be straightforwardly used.
\subsection{Simulation 1}
\label{S:num1}
We consider a quarter of cylinder $\Sigma$ defined as $\Sigma=\{x^2+y^2=1,\quad
0<z<2,\quad x>0, \quad y>0\}$.
Using cylindrical coordinates, it is parametrized by the following mapping:
$$ X: \Omega = (0,2)\times\left(0,\frac{\pi}{2}\right)\rightarrow \Sigma, \quad X(s_1,s_2) = (\cos(s_1), \sin(s_1), s_2).$$
This is an \emph{isometric} mapping \citep{Stoker89}. That means that the mapping preserves lengths and angles, and thus the area. In other words, the parametric domain $\Omega$ is not distorted by the mapping. This kind of mapping allows to indifferently work on the parametric domain $\Omega$ or directly on the surface $\Sigma$. Thus, the smoothing can be performed on the parametric domain, which is planar, namely using any traditional method for planar smoothing. In particular, in the following, we shall compare the proposed technique to TPS.

The surface $\Sigma$ is exactly representable using NURBS basis functions of degree 2 or higher which are at least globally $C^1-$continuous. We are interested in recovering the function:
$$f(x,y,z)= \sin\left(\frac{5 \pi}{2}\left[ xy^2 - y \left( \frac{z}{2} - 1 \right)^2 + x^2 \left( \frac{z}{2} - 1 \right)\right]  + \frac{\pi}{3}\right),$$
from noisy and discrete observations. The quarter of cylinder $\Sigma$ and the function $f$ are shown in Figure~\ref{fig:num.1.ex}. The function $f$ is evaluated in $N=100$ points $\left\{\mathbf{p}_i\right\}_{i=1}^N$ \dblue{located on a 10$\times$10 grid of equally spaced points in the parametric domain} and is affected by independent Gaussian observational errors $\varepsilon_i$ of zero mean and standard deviation $\sigma = 0.125$. We generate the data $y_i=f(\mathbf p_i)+\epsilon_i$ for $i=1,\ldots,N$ and the simulation is repeated $M=100$ times. \dblue{We use random scalars drawn from the standard normal distribution to generate the noise (specifically, we used the MATLAB function \texttt{randn})}. The dimension of the IGA space $\mathcal F^h$ varies between $N^h=49$ and $N^h=121$ and we use different NURBS basis functions, namely globally $C^1-$, $C^2-$, and $C^3-$ continuous of degrees $p=2,3$, and $4$ respectively, obtained from $k-$refinement. The smoothing parameter $\lambda$ is chosen at each simulation repetition and for each basis setting considered, by minimizing  GCV criterion in (\ref{eq::def_GCV}).
\begin{figure}[t!]
\centering
\includegraphics[trim=100 0 100 30,scale=.25]{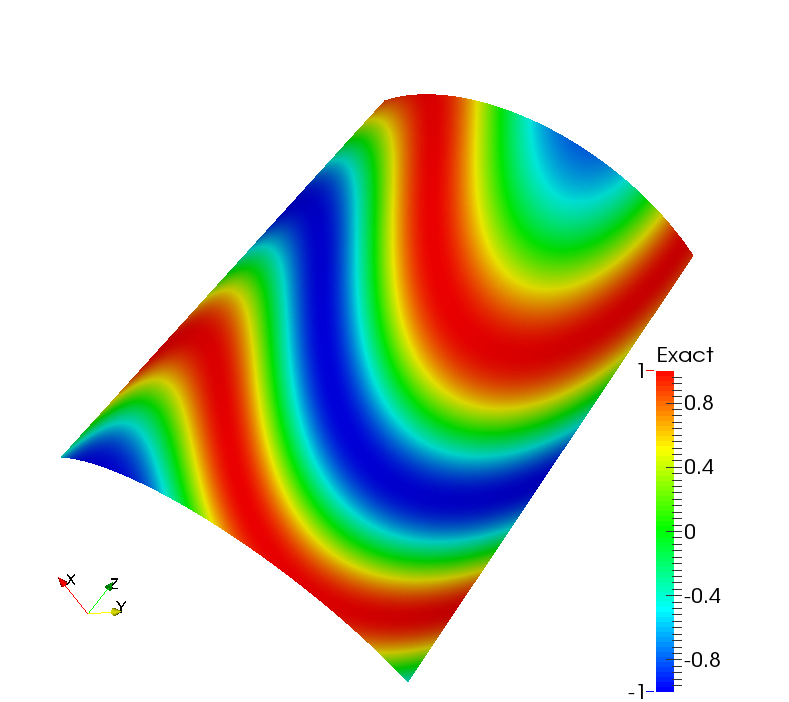}
\caption{\label{fig:num.1.ex}Simulation~$1$. Surface $\Sigma$ (quarter of cylinder) and exact function $f$.}
\end{figure}

\begin{figure}[htbpt]
\centering
\begin{subfigure}{.49\linewidth}
\includegraphics[scale=0.23]{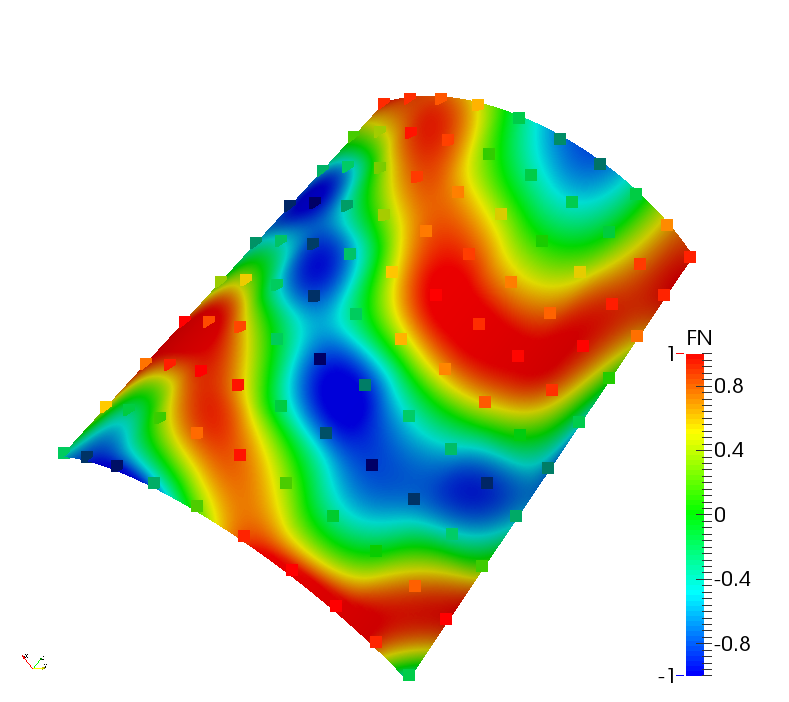} 
\subcaption{$N^h = 49$}
\end{subfigure}
\begin{subfigure}{.49\linewidth}
\includegraphics[scale=0.23]{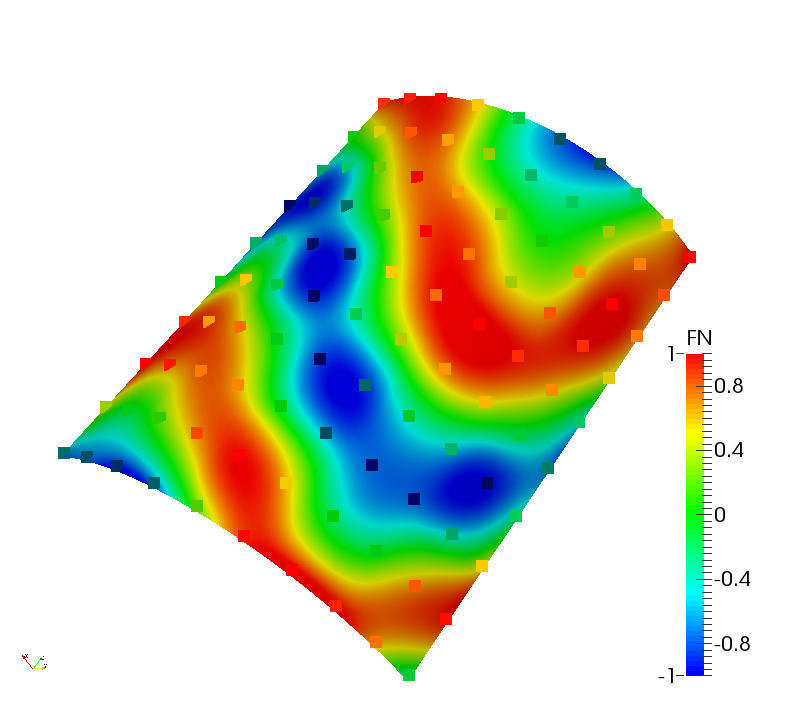}
\subcaption{$N^h = 49$}
\end{subfigure}
\begin{subfigure}{.49\linewidth}
\includegraphics[scale=0.23]{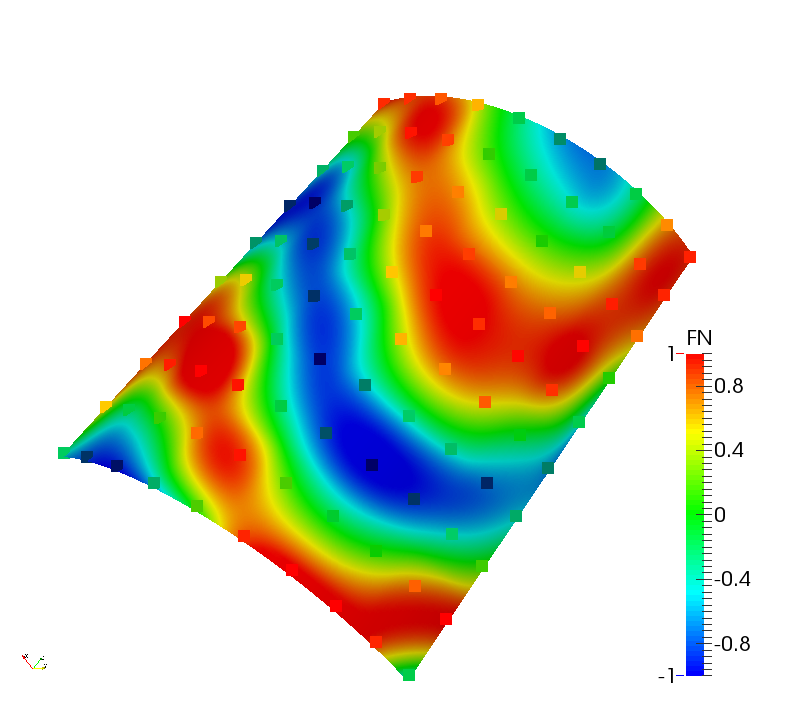}
\subcaption{$N^h = 64$} 
\end{subfigure}
\begin{subfigure}{.49\linewidth}
\includegraphics[scale=0.23]{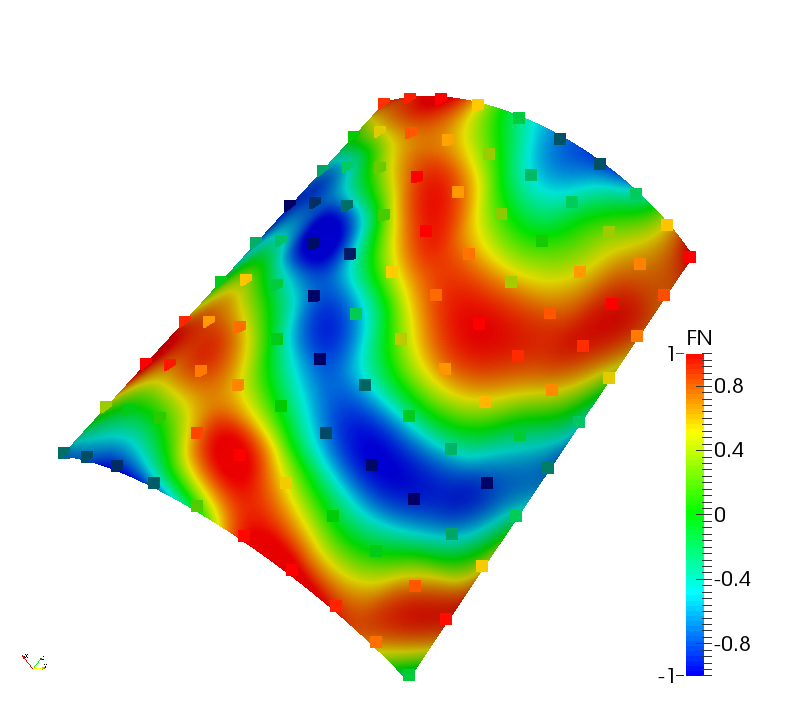}
\subcaption{$N^h = 64$}
\end{subfigure}
\begin{subfigure}{.49\linewidth}
\includegraphics[scale=0.23]{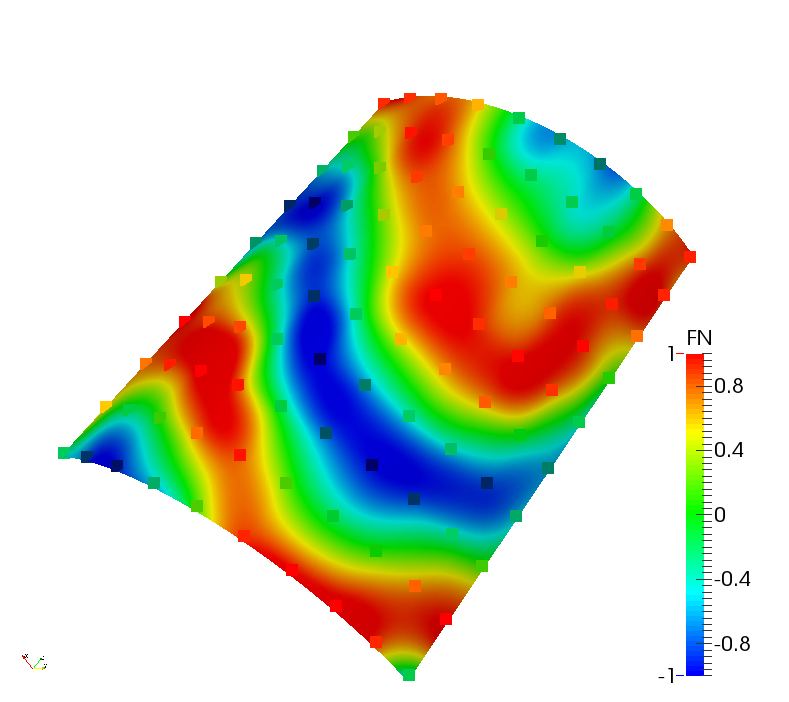}
\subcaption{$N^h = 121$}
\end{subfigure}
\begin{subfigure}{.49\linewidth}
\includegraphics[scale=0.23]{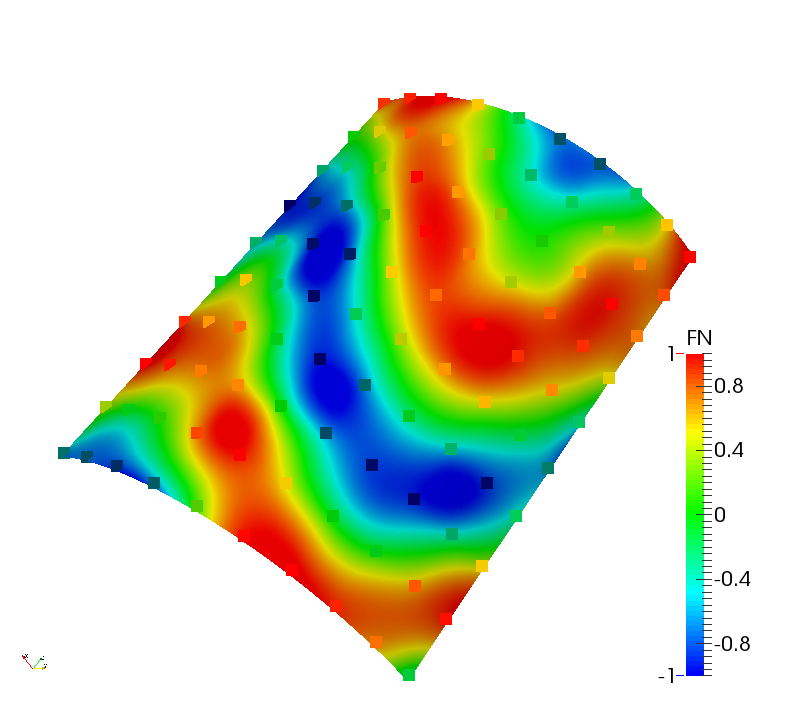}
\subcaption{$N^h = 121$}
\end{subfigure}
\caption{\label{fig:num.1.fn}Simulation $1$. Estimated functions $\hat{f}^h$ in the first (left) and second (right) simulation repetition, out of $M=100$ repetitions, using  $N^h = 49$ (top), $N^h = 64$ (middle) and $N^h = 121$ (bottom) number of basis functions, globally $C^1-$continuous of degree $p=2$. The corresponding measured values $\left\{y_i\right\}_{i=1}^N$ are displayed on the same color scale as the true function and estimates.}
\end{figure}
\begin{figure}[t]
\centering
\begin{subfigure}{.49\linewidth}
\includegraphics[scale=0.24]{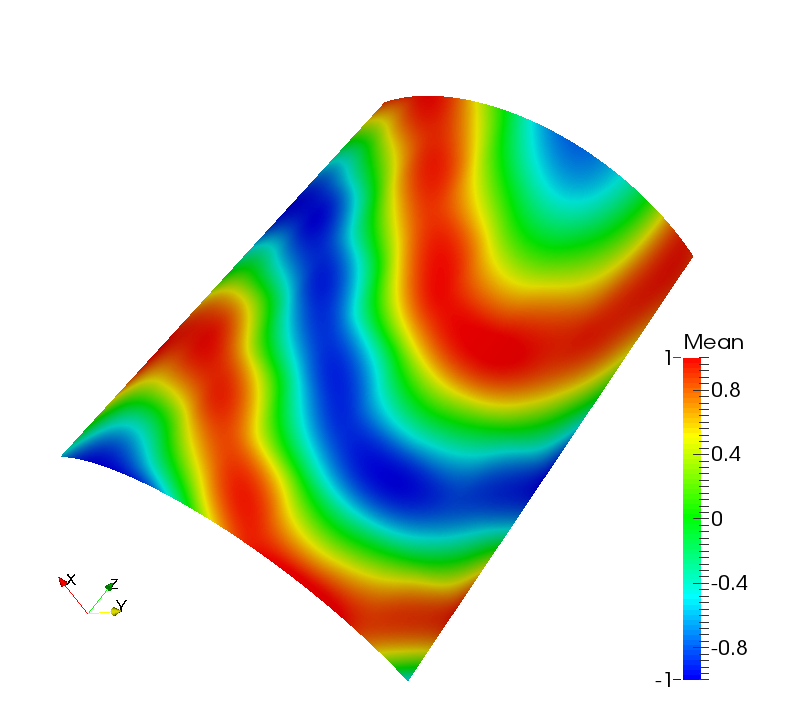} 
\end{subfigure}
\begin{subfigure}{.49\linewidth}
\includegraphics[scale=0.24]{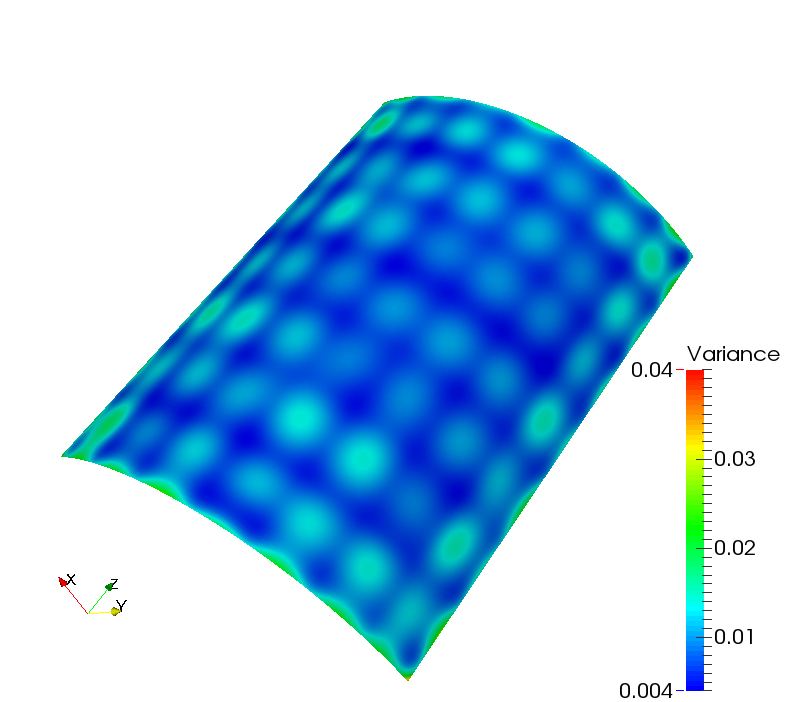}
\end{subfigure}
\caption{\label{fig:num.1.n100}Simulation~$1$. Empirical mean function $\overline{\hat f^h}$ (left) and empirical variance of the function $\hat{f}^h$~(right) over the M=100 simulation repetitions, for NURBS basis functions of degree $p=2$, globally $C^1-$ continuous and of dimension $N^h=81$. }
\end{figure}
The first two estimated functions $\hat{f}^h$ over the $M$ ones are displayed in Figure~\ref{fig:num.1.fn}. We observe that these are qualitatively good estimates of $f$. In general, the number of basis functions $N^h$ must be chosen carefully. Indeed, this choice should depend both on the complexity of the function to be estimated $f$ and of the number of data points $N$ available. When the number of basis functions $N^h$ is small, IGS is not able to capture the behaviour of the function $f$, as it would be the case with any other smoother. On the contrary, when the number of basis functions $N^h$ is high, we see that there is a larger variability in the estimated functions $\hat{f}^h$. Indeed, when the number of basis used is too high, GCV criterion can lead to overfitting, that is the estimated function incorporates noise. However, we see in Figure~\ref{fig:num.1.fn} that the estimates are not very sensitive to this choice. Finally, we remark that the minimum number of basis functions is basically dictated by the number of functions used to represent the surface with NURBS.

We notice, following Section \ref{sec::math_model}, that we used natural boundary conditions, for which we have not formally proved the well-posedness of the problem. We report in Table \ref{tab:cond_numb} the mean condition number $K_\infty$ for the matrix $A$ of (\ref{equ::fin_system}), with $\lambda$ chosen with the GCV criterion and for $p=2$ only, since results for $p=3$ and $p=4$ are similar. As a matter of fact, the system (\ref{equ::fin_system}) results to be well-conditioned in all our numerical experiments.
\begin{table}[t]
\centering
\begin{tabular}{l|c|c|c|c|c|c|c|c|c|c}
$N^h$ & $25$  & $36$  & $49$  & $64$  & $81$  & $100$  & $121$  & $144$  & $169$  & $196$ \\
\hline
$K_\infty(\mathbf{A})$ &  $46.1$ & $47.3$ & $48.2$ & $46.5$ & $43.9$ &  $80.9$ &  $210$ & $431$ & $355$ & $307$
\end{tabular}
\caption{\label{tab:cond_numb}Simulation 1. Mean condition number $K_\infty(\mathbf{A})$ for the matrix $\mathbf{A}$, with smoothing parameter chosen with GCV, $p=2$ and for different number of basis functions $N^h$.}
\end{table}
In order to better assess the IGS method, we use the empirical mean function $\overline{\hat f^h}=\frac{1}{M} \sum_{i=1}^M \hat{f_i^h}(\mathbf{x})$ and the associated empirical variance function $\frac{1}{M} \sum_{i=1}^M (\hat{f_i^h}(\mathbf{x}) -\overline{\hat {f^h}}(\mathbf{x}))^2,$ where $\hat{f}_i^h$ denotes the $i$-th estimated function. These are both shown in Figure~\ref{fig:num.1.n100}, in the case of globally $C^1-$continuous basis functions of degree $p=2$ and with $N^h=81$ basis functions.  The estimates appears to have a negligible bias and a relatively small variance.

\begin{figure}[t!]
\centering
\begin{subfigure}{0.49\linewidth}
\includegraphics[scale= 0.53]{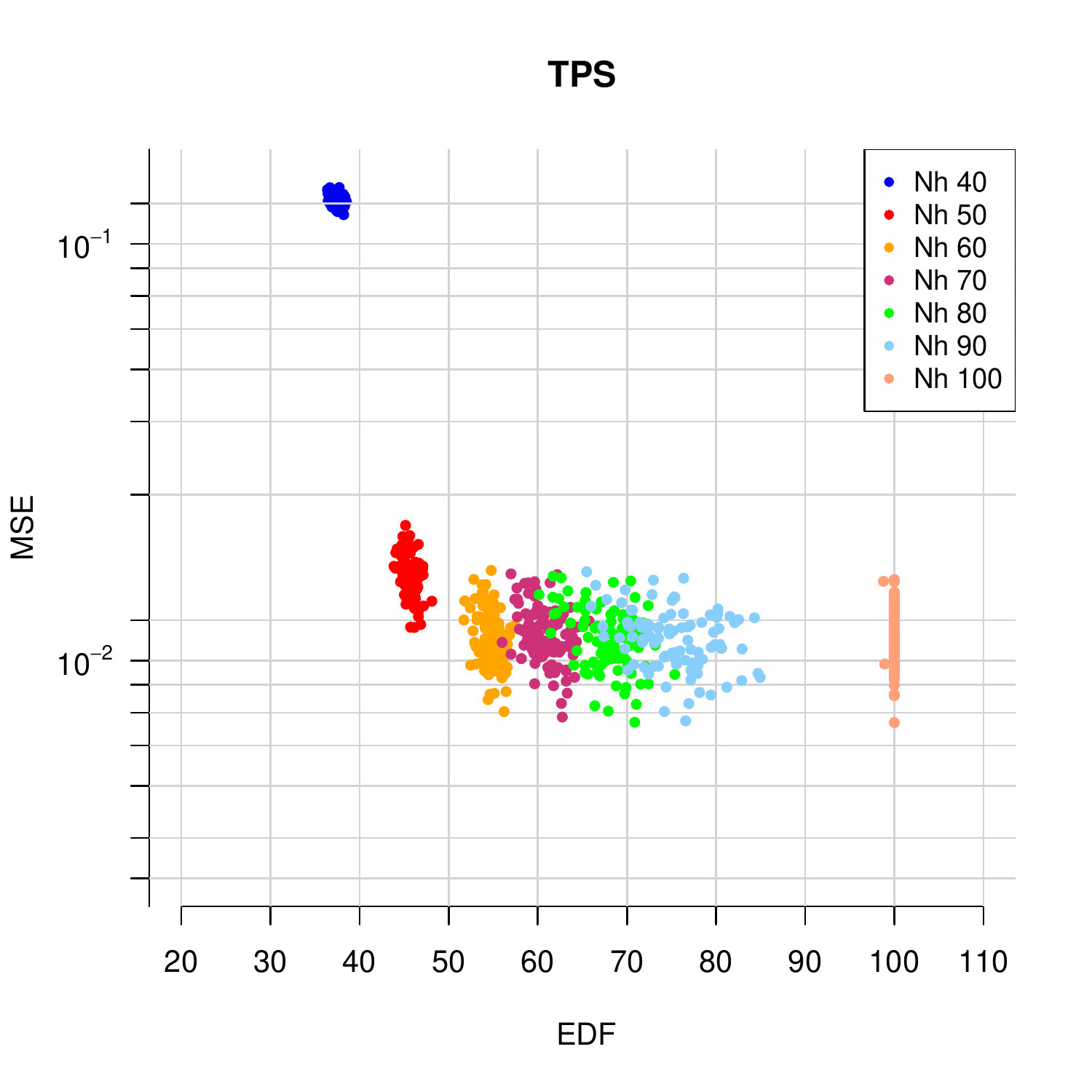}
\end{subfigure}
\begin{subfigure}{0.49\linewidth}
\includegraphics[scale= 0.53]{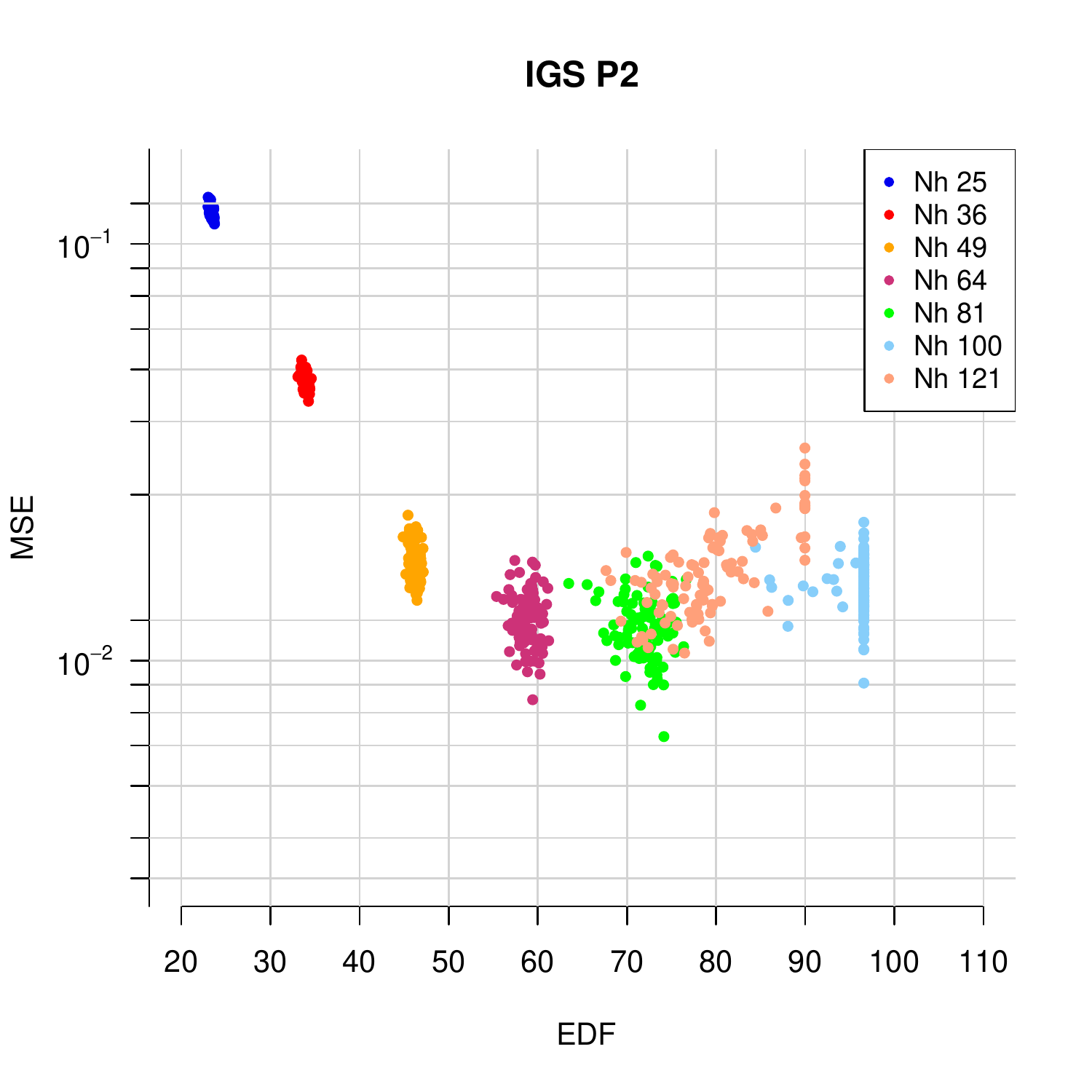}
\end{subfigure}
\begin{subfigure}{0.49\linewidth}
\includegraphics[scale= 0.53]{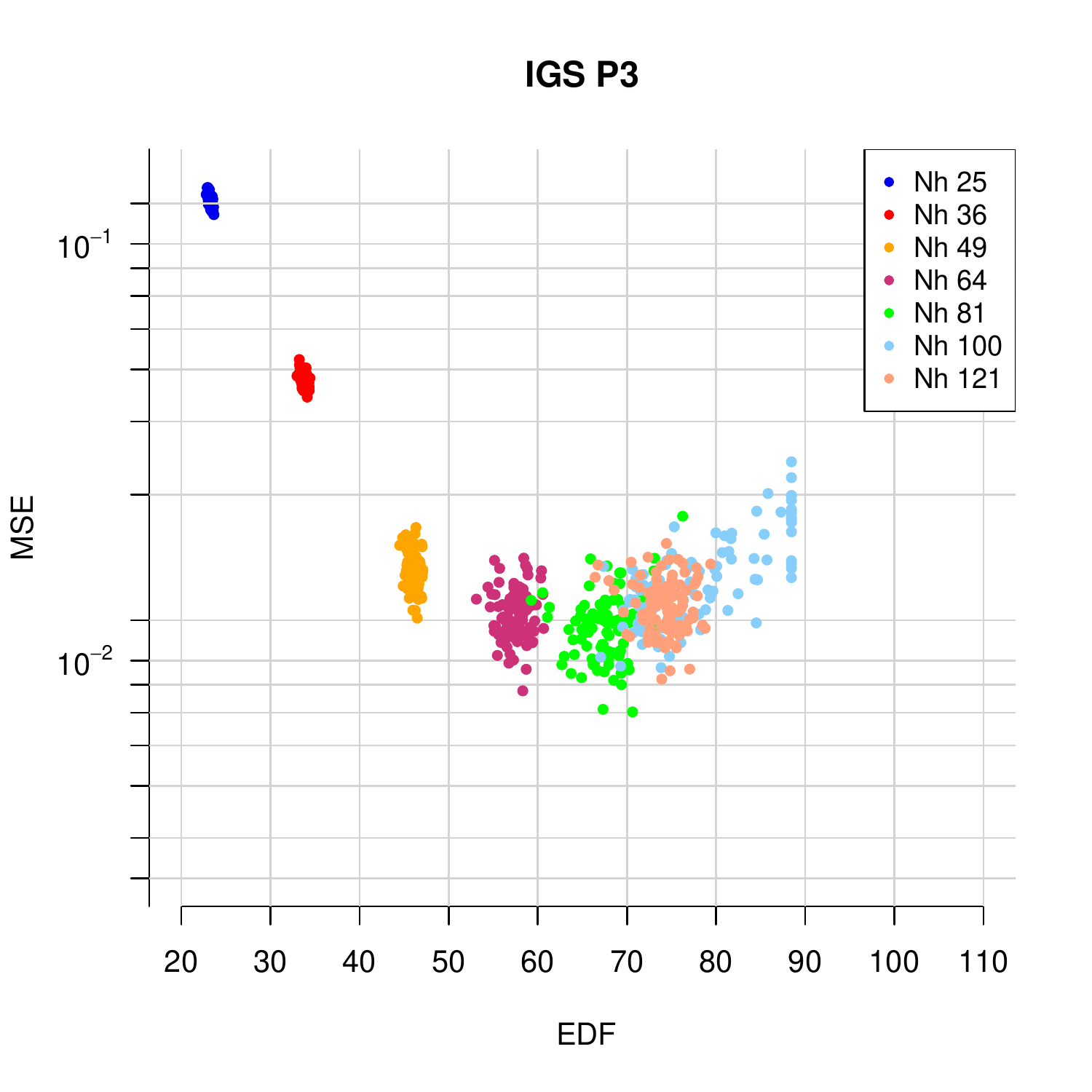}
\end{subfigure}
\begin{subfigure}{0.49\linewidth}
\includegraphics[scale= 0.53]{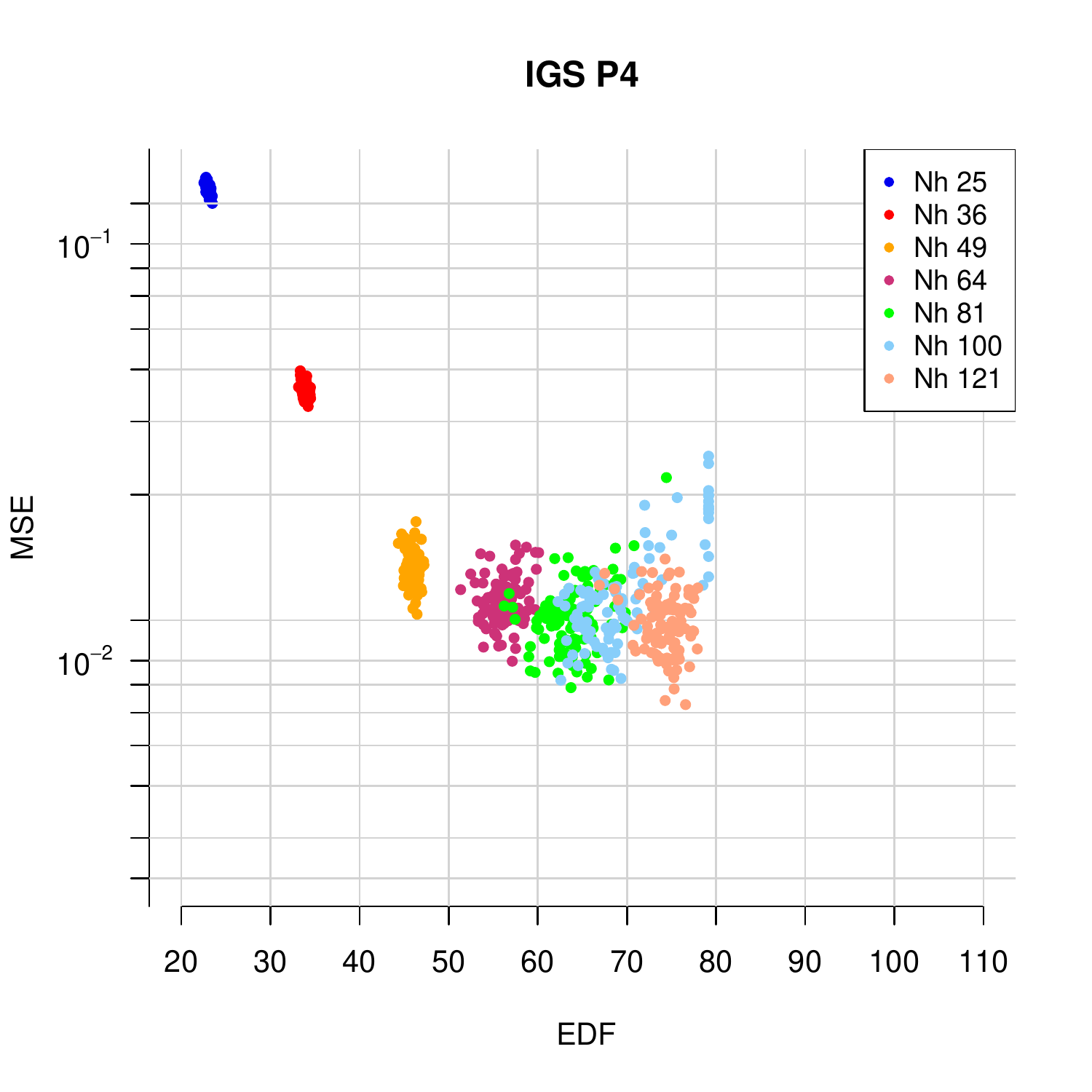}
\end{subfigure}
\caption{\label{fig:comp_methods_sig0125}Simulation~$1$. Comparisons of the MSE for IGS and TPS (top-left), in terms of the equivalent degrees of freedom (EDF). IGS uses globally $C^1-$, $C^2-$ and $C^3-$continuous basis functions of degrees $p=2$ (top-right), $3$ (bottom-left) and $4$ (bottom-right), respectively.}
\end{figure}

We compare our methodology with a widely used smoothing technique, TPS, using cylindrical coordinates. This method is implemented for instance in the R package \texttt{mgcv} \citep{mgcv}. We use different number of basis functions $N^h$ for a comparison, selecting the smoothing parameter at each simulation repetition and for each number of basis considered via GCV. As a criterion for the comparison, we use the mean squared error (MSE) of the estimator, computed as:
$$\text{MSE}= \frac{1}{L}\sum_{j=1}^L \left(\hat{f}^h(\mathbf{l}_j)- f(\mathbf{l}_j)\right)^2, $$
where $\mathbf{l}_1, \dots,\mathbf{l}_L$ is a lattice of $150\times 200$ evaluation points on $\Sigma$. We compute the MSE for $N^h=49,64,81, 100$, and $121$ and degree $p=2,3,$ and $4$ for IGS and for $N^h= 40,50,\dots,100$ for TPS. The comparisons of MSE in terms of the equivalent degree of freedom is shown in Figure~\ref{fig:comp_methods_sig0125}. We also compare in Figure  \ref{fig:comp_methods_best} the best setting of each methodology and for each number of basis functions, that is the lowest median MSE. Finally, Figures \ref{fig:comp_methods_sig0125} and \ref{fig:comp_methods_best} show that IGS is comparable to TPS in terms of performance, in a setting where the latter technique may be applied. 
\dblue{Figure \ref{fig:comp_methods_sig0125} illustrates a key feature of the IGS method and more generally of all smoothing techniques. Specifically, one can notice that increasing the number of basis functions does not improve the estimated function $\hat{f}^h$. Indeed, although we increase the number of basis functions to build $\hat{f}^h$, the number of measurement points $N$ remains the same, i.e. $\hat{f}^h$ is still built from the same set of data, but only through a richer finite dimensional space $\mathcal{F}^h$. 
Therefore, the convergence of $\hat{f}^h$ to $f$ should be simultaneously regarded through the number of data $N$ and the quality of the NURBS space $\mathcal{F}^h$. We highlight in Figure  \ref{fig:comp_methods_sig0125} that the same happens with TPS-based smoothing. 
A key role in the convergence of the method may be played by the NURBS space $\mathcal{F}^h$ when the number of measurement points is clustered in a region of $\Sigma$, for which mesh refinement techniques can be used.} 

\begin{figure}[t!]
\centering
\includegraphics[scale= 0.5]{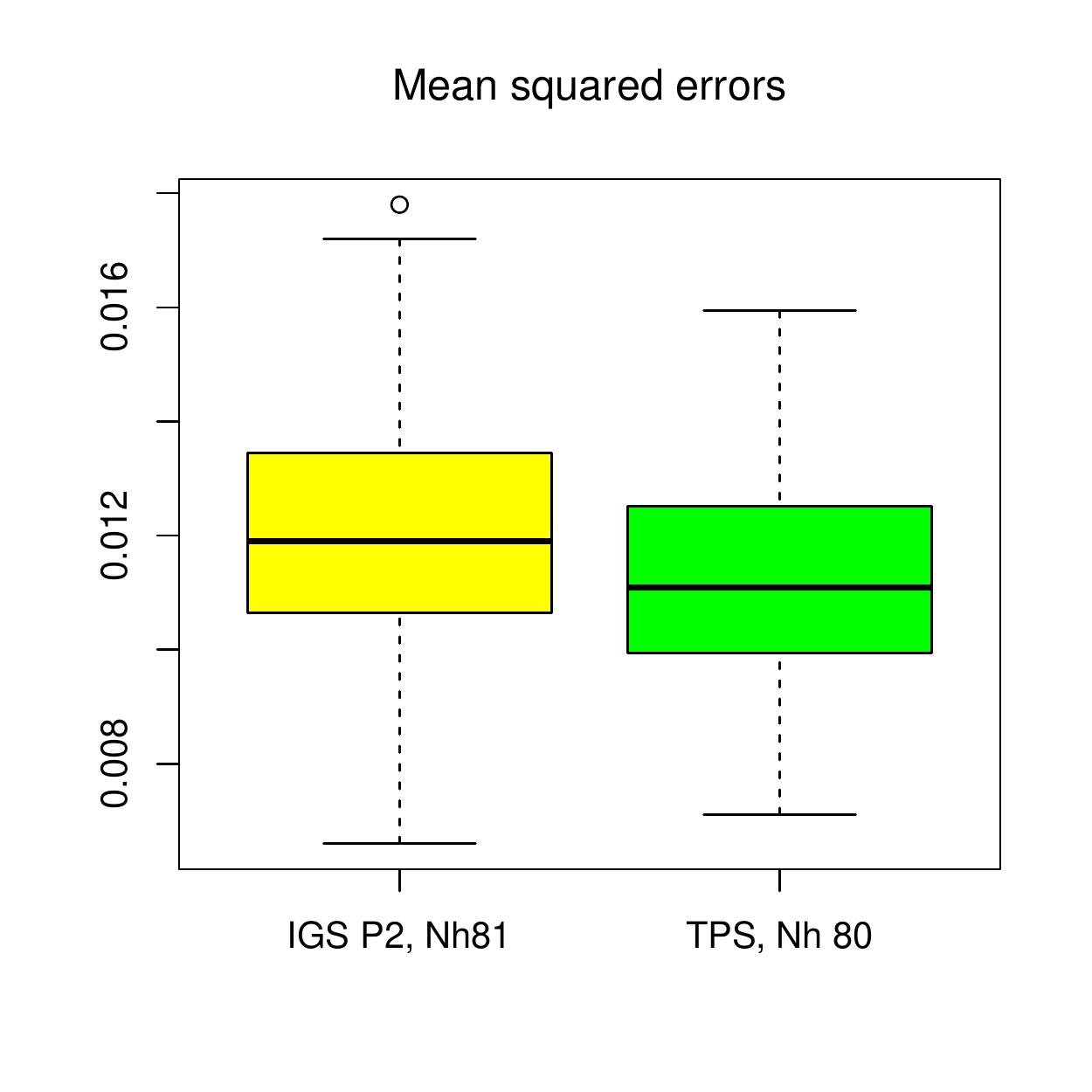} 
\caption{\label{fig:comp_methods_best}Simulation~$1$. Comparisons of the MSE for the best setting of IGS and TPS.}
\end{figure}

\subsection{Simulation 2}
\label{S:num.2}
We consider the surface $\Sigma$ reported in Figure~\ref{fig:num.2.ex}  which is represented in terms of NURBS basis functions of degree $p=2$ starting from the knot vector $\varXi=\{0,0,0,1,1,1\}$ along both the parametric directions and control points $\mathbf P_1=(1,0,0)^T$, $\mathbf P_2=(0,0.75,0)^T$, $\mathbf P_3=(0,1,0)^T$, $\mathbf P_4=(1,0,1)^T$, $\mathbf P_5=(1,1,0.5)^T$, $\mathbf P_6=(0,1,1)^T$, $\mathbf P_7=(0.25,0,1)^T$, $\mathbf P_8=(0.25,0.25,0.25)^T$, and $\mathbf P_9=(0.25,0,1)^T$, and with the corresponding weights vector $\mathbf w = (1,1/\sqrt{2},1,1/\sqrt{2},1/2,1/\sqrt{2},1,1/\sqrt{2},1)^T$. As reported in Figure~\ref{fig:num.2.ex}, we consider the exact function to be estimated:
$$f(x,y,z)=\left( 2 \, \left( x \sin( \pi  y )  + \cos( \pi x)   y \right) - 1 \right)  \,  \cos\left( \frac{5}{4} \pi z \right),$$ which is evaluated in $N=100$ points $\left\{\mathbf p_i\right\}_{i=1}^N$ located on $\Sigma$; \dblue{more specifically, they are located on a 10$\times$10 grid of equally spaced points in the parametric domain}. The $N$ observations are affected by independent Gaussian observational errors of zero mean and standard deviation $\sigma=0.125$. The sampling of the data $y_i=f(\mathbf p_i)+\epsilon_i$, $i=1,\ldots,N$, is repeated $M=50$ times. The sampling locations $\left\{\mathbf p_i\right\}_{i=1}^N$ remain the same for all the repetitions. We consider three NURBS spaces $\mathcal F^h$ of dimensions $N^h=100$, of globally $C^1-$, $C^2-$ and $C^3-$ continuous NURBS basis functions of degrees $p=2,3$, and $4$, respectively. These are obtained by $k-$refinement of the original NURBS basis.

\begin{figure}[tp!]
\centering
\includegraphics[scale=.17]{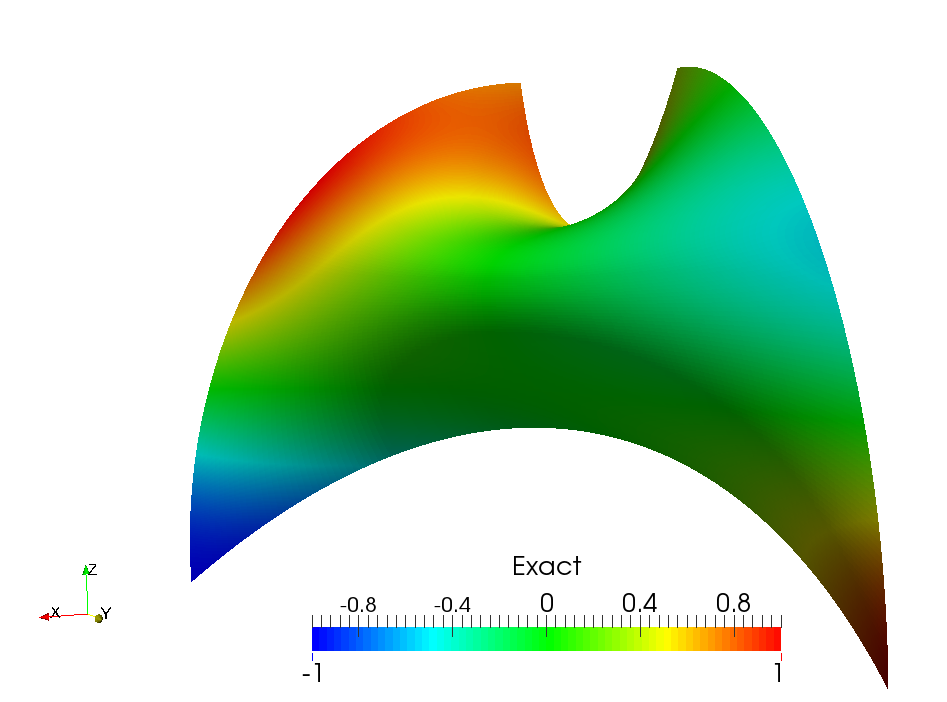}
\caption{Simulation~$2$. Surface $\Sigma$ and exact function $f$}
\label{fig:num.2.ex}
\end{figure}

\begin{figure}[t!] 
\centering
\begin{subfigure}[b]{.5\linewidth}
\includegraphics[scale=0.23]{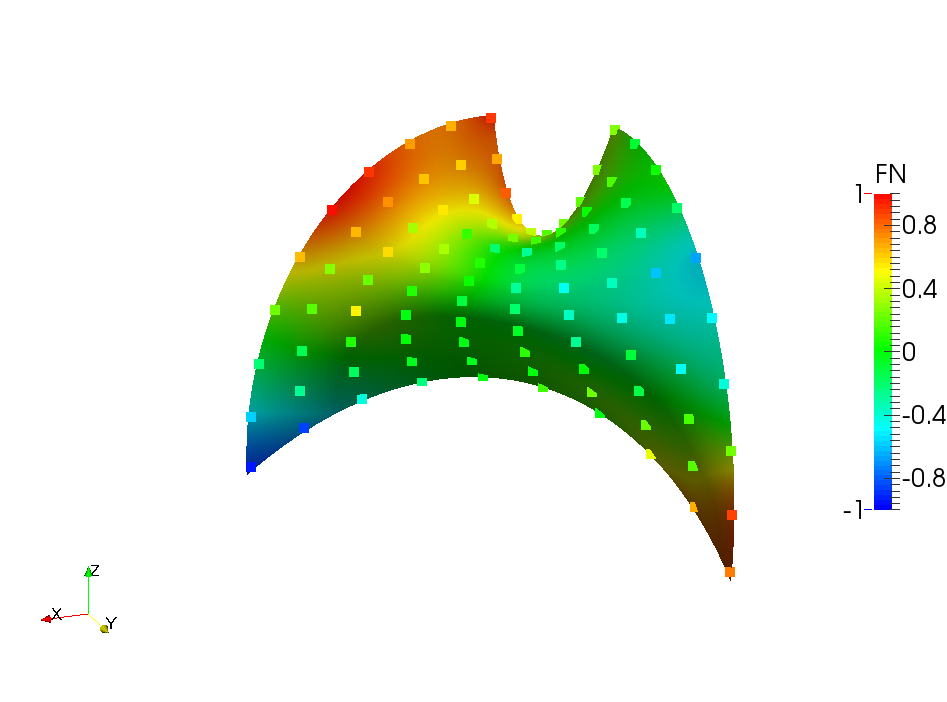} 
\end{subfigure}%
\begin{subfigure}[b]{.5\linewidth}
\includegraphics[scale=0.23]{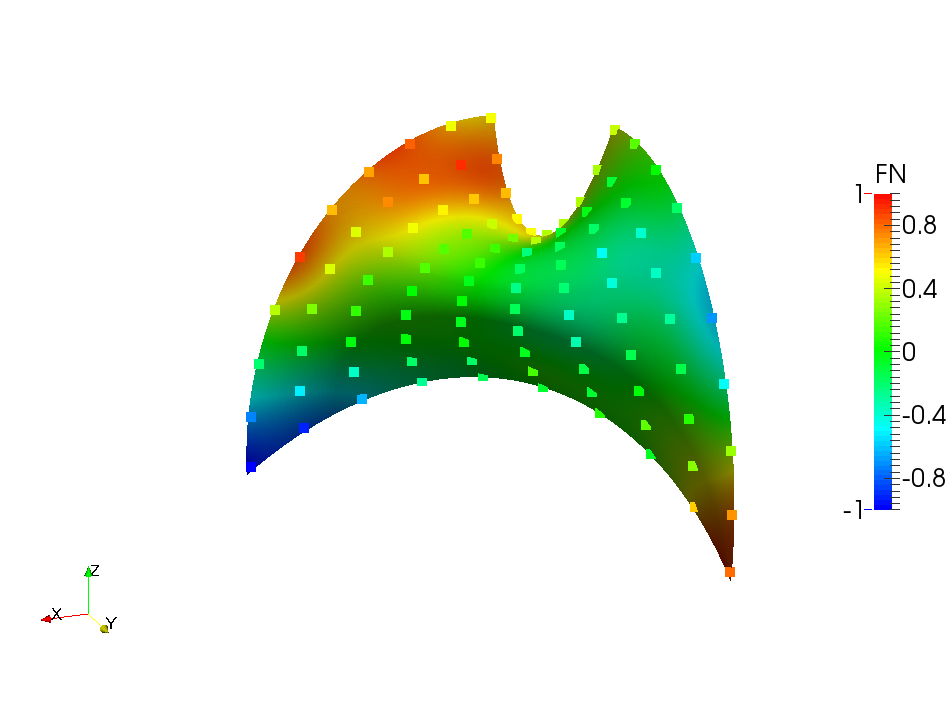}
\end{subfigure}%
\caption{\label{fig:num.2.fn}Simulation~$2$. Functions $\hat f^h$ estimated for the last two repetitions, for $N=100$ points; positions of the points $\left\{\mathbf p_i\right\}_{i=1}^N$ and the corresponding measured values $\left\{y_i\right\}_{i=1}^N$ for NURBS basis functions of degrees $p=3$ and globally $C^2-$continuous on $\Sigma$.}
\end{figure}
\begin{figure}[htbp]
\centering
\begin{subfigure}[b]{.5\linewidth}
\includegraphics[scale=.17]{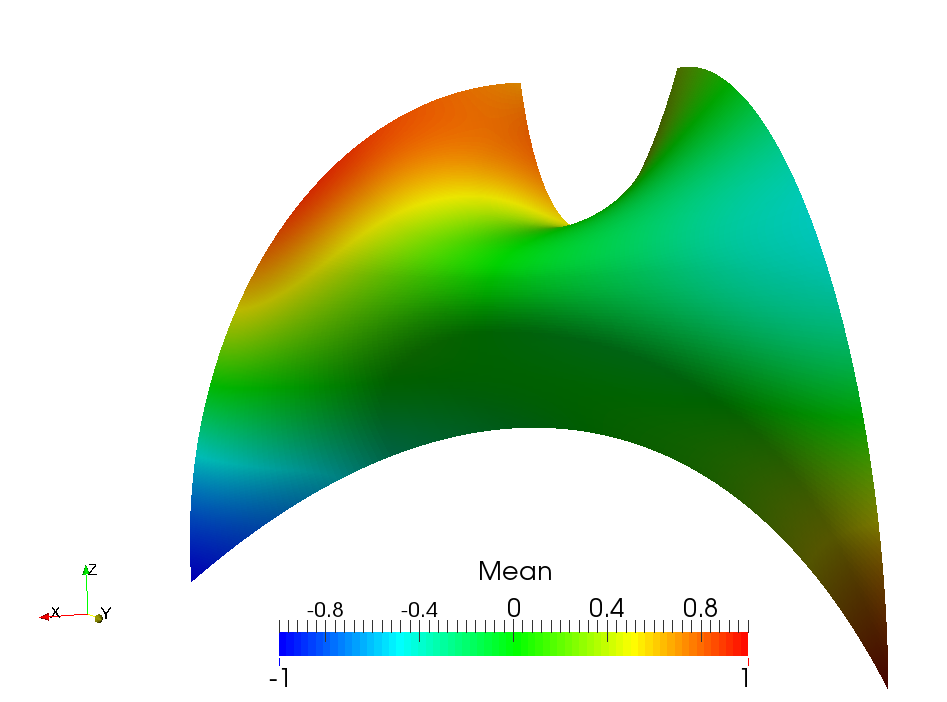}
\end{subfigure}%
\begin{subfigure}[b]{.5\linewidth}
\includegraphics[trim=10 0 2cm 0, clip,scale=.17]{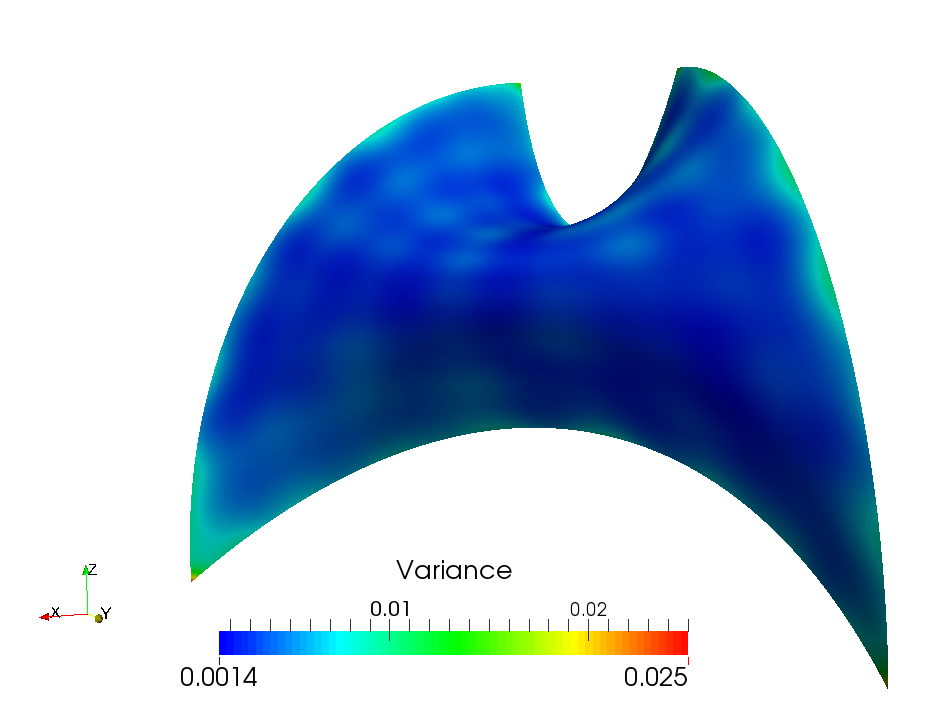}
\end{subfigure}
\caption{\label{fig:num.2.n100}Simulation~$2$. Empirical mean function $\overline{\hat f^h}$ (left) and empirical variance of the function $\hat{f}^h$~(right) for $N=100$ and $M=50$ obtained with NURBS basis functions of degrees $p=3$ and globally $C^2-$continuous on $\Sigma$.}

\end{figure}
In Figure~\ref{fig:num.2.fn} we report the estimated functions $\hat f^h$ for the last two simulation repetitions,  by considering NURBS spaces of basis functions of degree $p=3$ and globally $C^2-$continuous on $\Sigma$. Figure~\ref{fig:num.2.n100} highlights the empirical mean function $\overline{\hat f^h}$ and the corresponding empirical variance function over the $M=50$ simulation repetitions,  obtained for the same NURBS basis functions. The results obtained for the degrees $p=2$ and $4$ are very similar and are not reported here for the sake of brevity. Our estimation has a negligible bias and a small variance, as shown in Figure~\ref{fig:num.2.n100}. The behaviour of the function $f$ seems to be very well captured by our estimator $\hat{f}^h$, as shown in Figure~\ref{fig:num.2.fn}. In this setting, we increase the regularity of the basis functions without changing the number of basis functions $N^h$. The estimated functions are thus globally $C^{p-1}-$continuous. We then compare the mean integrated squared errors (MISE) of IGS for different regularity and degrees of NURBS basis functions.  The MISE of any estimated  function $\hat{f}$ is defined as:
$$\text{MISE}(\hat{f}) = \frac{1}{\left| \Sigma \right|}\int_\Sigma \left(\hat{f} - f\right)^2 \ d\Sigma. $$
As we observe in Figure \ref{fig:RMSE_degree}, the quality of the estimation is not significantly affected by the regularity of the basis functions.

\begin{figure}[t!]
\centering
\includegraphics[scale= 0.5]{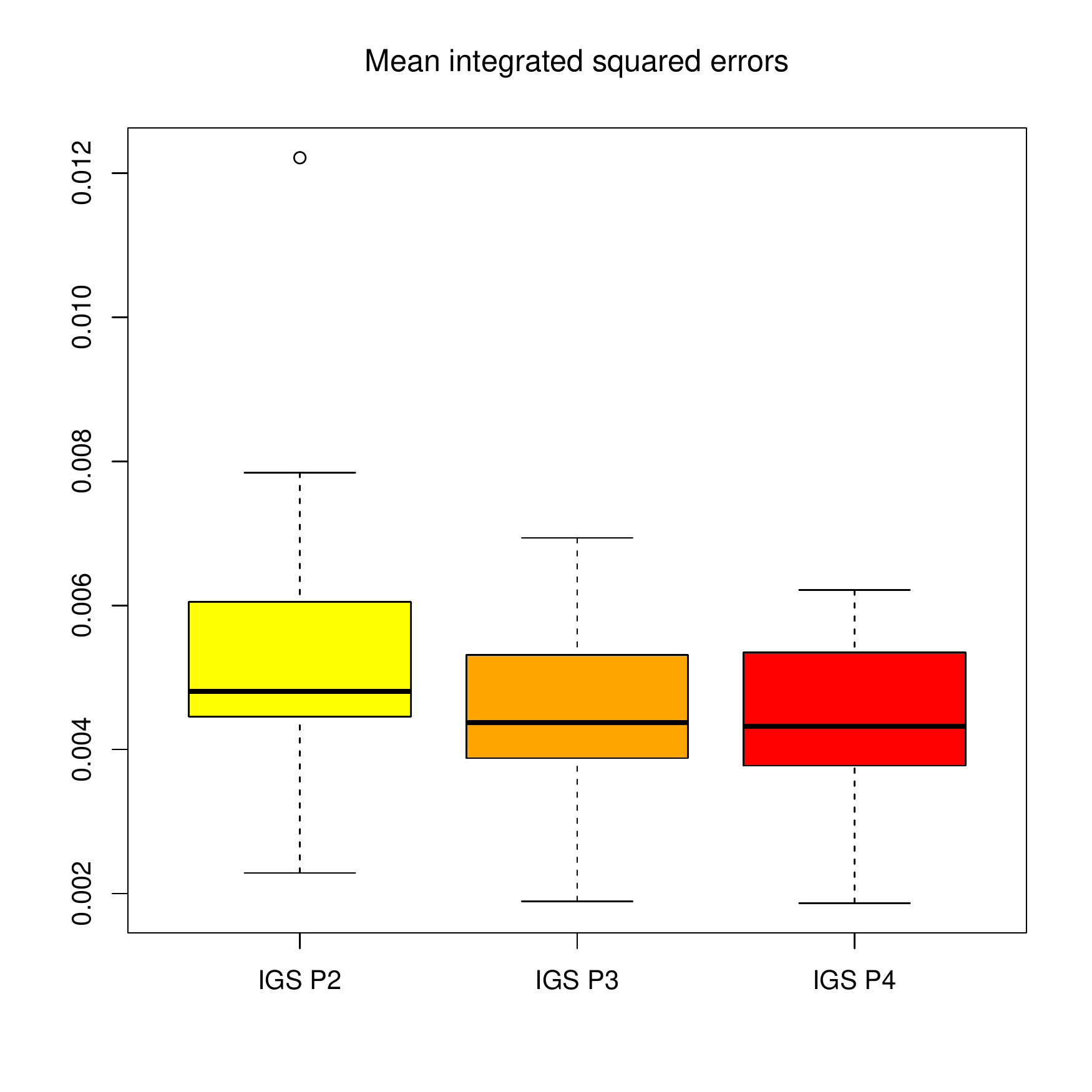}
\caption{Simulation 2. Boxplots of the MISE for IGS with degrees $p=2,3,$ and $4$, globally $C^1-, C^2-$ and $C^3-$continuous basis functions.}
\label{fig:RMSE_degree}
\end{figure}

\section{Estimation of aerodynamic force on the SOAR's winglet}
\label{sec:soar}
We aim at estimating the pressure coefficient field $C_p$ and the corresponding aerodynamic force on the inboard winglet $\Sigma$ of the SOAR shuttle shown in Figure~\ref{fig::shuttle}. The pressure coefficient is a dimensionless field related to the pressure field. It describes the relative pressure over the surface of the winglet and is given by:
$$C_p(\mathbf{x}) = \frac{p(\mathbf{x})- p_\infty}{\frac{\rho_\infty}{2}\  v_\infty^2},$$
where $p(\mathbf{x})$ is the pressure at point $\mathbf{x}$, $v_\infty$ and $p_\infty$ are the far field wind speed and pressure, with $\rho_\infty$  the air density.
The flow regime of the free stream is subsonic, namely the Mach number is $0.7$, for which we can reasonably assume that the pressure coefficient field over the winglet remains sufficiently smooth, since transonic effects are marginal. The force $\mathbf{F}$ acting on the winglet and due to the contribution of the pressure (see, e.g., \citet{Anderson10}) is given by:
$$ \mathbf{F} =\int_\Sigma \left(p- p_\infty\right)\ \mathbf{n}_\Sigma\  d\Sigma = \frac{\rho_\infty}{2}\  v_\infty^2 \int_\Sigma C_p\ \mathbf{n}_\Sigma\  d\Sigma,$$
where $\mathbf{n}_\Sigma$ is the unit normal vector to the surface $\Sigma$. The final quantity of interest of the estimation is the aerodynamic force $\mathbf{F}$.
The pressure coefficient $C_p$ is measured in $N=824$ data points on the surface $\Sigma$ for which the sampled data are depicted in 
Figure~\ref{fig:CP_est_full}, which shows the data after application of an affine transformation\footnote{The displayed data have been modified for copyright reasons with respect to the original ones provided by \textit{S3, Swiss Space Systems Holding SA}.}, that modifies the values of about 3\% at most. These data are derived from Computational Fluid Dynamics (CFD) simulations and represent a preliminary study prior experimental campaign in a wind tunnel. The inboard winglet is represented by a single NURBS patch surface built with $N^h=584$ degrees of freedom. The  geometry is built with a NURBS basis of degree $p=5$ and functions globally $C^2-$continuous. Thus, the minimum number of basis functions representing the function space $\mathcal{F}^h$ for IGS is $N^h=584$.

First, we estimate the pressure coefficient field $\hat{C}^h_p$ with all the available points. We use IGS to estimate the pressure coefficient field and then the aerodynamic force. By using IGS with $N=824$ and $N^h= 584$, for NURBS basis functions of degree $5$ and globally $C^2-$continuous basis functions, we obtain the $\hat{C}^h_p$ field reported in Figure \ref{fig:CP_est_full} (right).

\begin{figure}[t!]
\centering
\begin{subfigure}[b]{.5\linewidth}
\includegraphics[scale=.25]{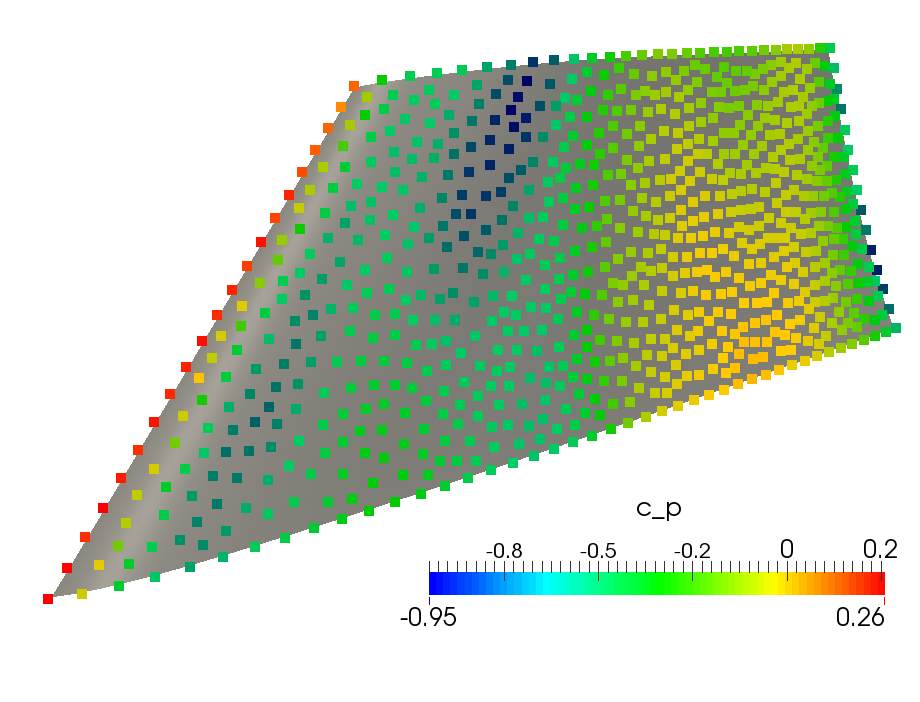}
\end{subfigure}%
\begin{subfigure}[b]{.5\linewidth}
\includegraphics[scale=.25]{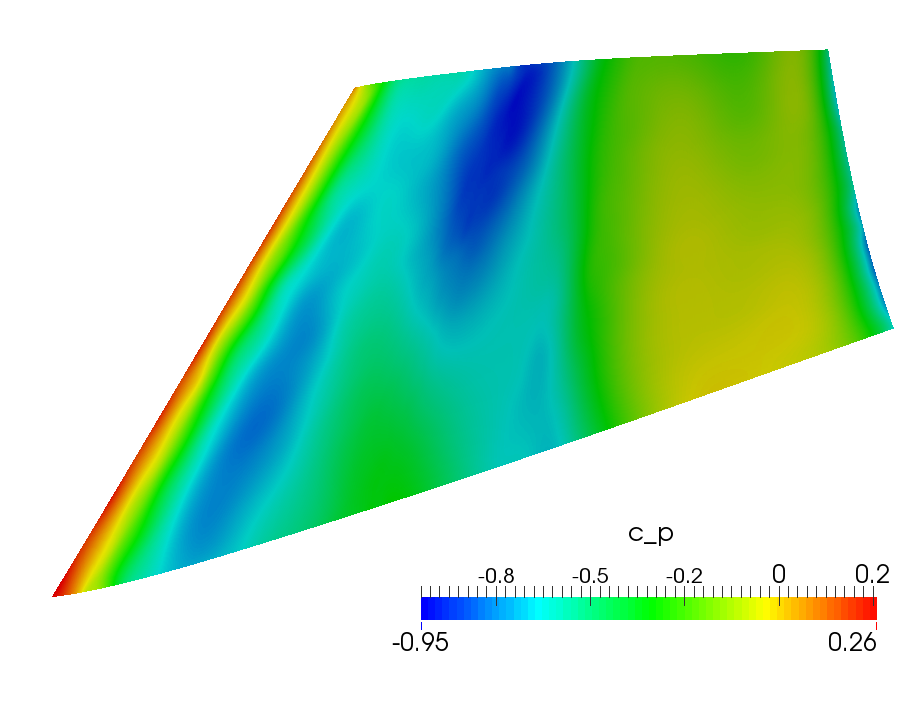}
\end{subfigure}%
\caption{\label{fig:CP_est_full} Original data (left) and smoothed estimated pressure coefficient $\hat{C}^h_p$, with $N=824$ data points and $N^h= 584$ basis functions (right).}
\end{figure}

We now assess the quality of IGS under the hypothesis that less data points than those effectively available can be used. Indeed, in industrial applications, the experimental measurements can be quite expensive and it could be impractical to have a large number of sampling points (in this case, pressure probes), and the pressure field can be measurable in fewer points than with numerical simulations. To assess the validity of IGS when few data points are available, we compute the aerodynamic force $\mathbf{\hat{F}}^h$ estimated from a subset of points over the winglet. We compare the results in terms of the aerodynamic force, which is the quantity of interest,  by using fewer data points and different number of basis functions with respect to the reference setting corresponding to $N^h=584$ and $N=824$. We compare the forces in terms of the relative difference of the modules (in \%), that is $\|\mathbf{\hat{F}}^h-\mathbf{\hat{F}}^h_{\text{ref}}\|_2 \slash \|\mathbf{\hat{F}}^h_{\text{ref}}\|_2  $ and of the direction (angle in degrees). The results are reported in Table \ref{tab:rel.diff.forces}.
We first use all the $N=824$ available points depicted in the Figure~\ref{fig:CP_est_full} (left) to estimate the pressure coefficient field. The estimation of the reference force is given by $\mathbf{\hat{F}}^h_{\text{ref}}= ( 0.5080, 49.40, -26.03)^T$ kN, which is in line with the expectations \dblue{in terms of direction and magnitude, according to the CFD simulations.}

\dblue{We then use only $N=618,412, 206$, and $103$ points and NURBS spaces of dimensions $N^h=584,989$, and $2079$.  The data sets with $N=103, 206$ and $412$ have been built using one, two and four over eight of the points in the original sequence, respectively; for $N=618$, the points are the complementary of those in the set with $N=206$.} In Figure~\ref{fig:soar_mult_CP_fields}, we report the estimated pressure coefficient fields on the surface computed using IGS. The choice of the smoothing parameter is done via minimization of the GCV criterion. The estimated functions are very similar when the number of points is relatively large ($N=824, 618,$ and $412$), as shown in Figure \ref{fig:soar_mult_CP_fields}. The relative change in the estimated force is quite small, as we can observe on Table \ref{tab:rel.diff.forces}. IGS is able to accurately estimate the force even with a small subset of the original data. 
\begin{table}[t!]
\centering
\begin{tabular}{c|c|c|c}
$N^h$ & $N$ & difference of direction (in degrees)& $\|\mathbf{\hat{F}}^h-\mathbf{\hat{F}}^h_{\text{ref}}\| \slash \|\mathbf{\hat{F}}^h_{\text{ref}}\|  $(in \%)\\
\hline
$584$ &  $103$ & 1.007 	& 3.293	\\
$584$ & $ 206$ & 0.816	& 1.036	\\
$584$ & $412$ & 0.084 	& 0.745\\
$584$ &  $618$ & 0.001  & 0.003	\\
\hline
$989$ &  $412$ & 0.278 	& 0.864	\\
$ 989$ & $618$ & 0.061 	& 0.205	\\
$989$ & $824$ & 0.003 	& 0.117\\
\hline 
$2079$ & $824$ & 0.0113 	& 0.296	\\
\end{tabular}
\caption{\label{tab:rel.diff.forces} Relative difference between vector of forces with respect to the reference force $\mathbf{\hat{F}}^h_{\text{ref}}$, estimated in the case where $N^h= 584$ and $N=824$, in degrees (left) and in norm (right).
Direction (in degrees) and relative modulus change (in \%) of the estimated aerodynamic force $\mathbf{\hat{F}}^h$ obtained with different points $N$ and basis functions $N^h$ with respect to the reference force $\mathbf{\hat{F}}^h_{\text{ref}}$.}
\end{table}

\begin{figure}[htbpt]
\centering
\begin{subfigure}{.49\linewidth}
\includegraphics[scale=0.23]{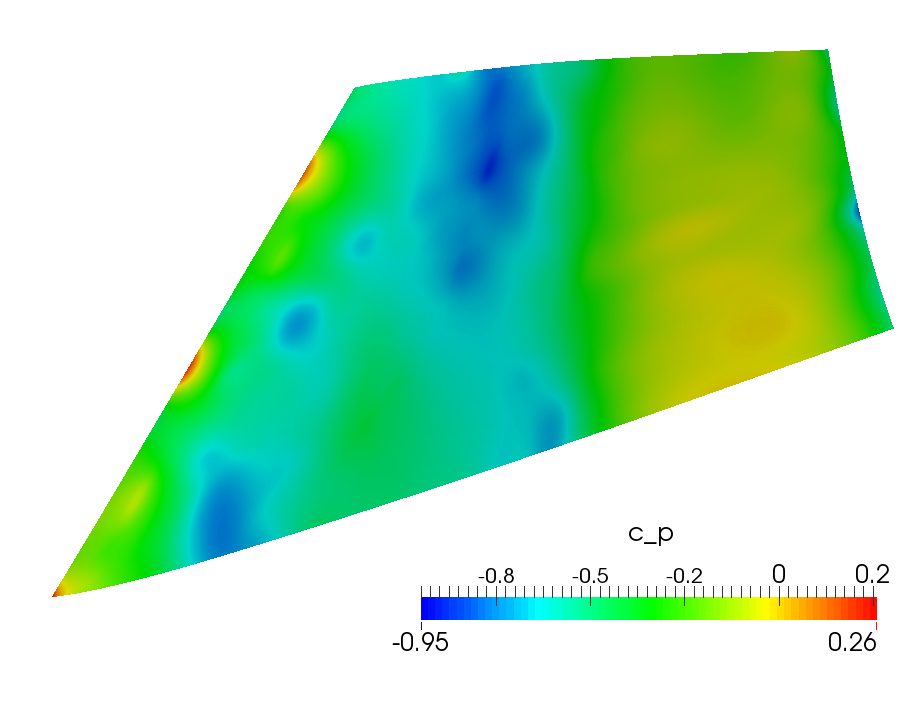} 
\subcaption{$N=103$ and $N^h=594$.}
\end{subfigure}
\begin{subfigure}{.49\linewidth}
\includegraphics[scale=0.24]{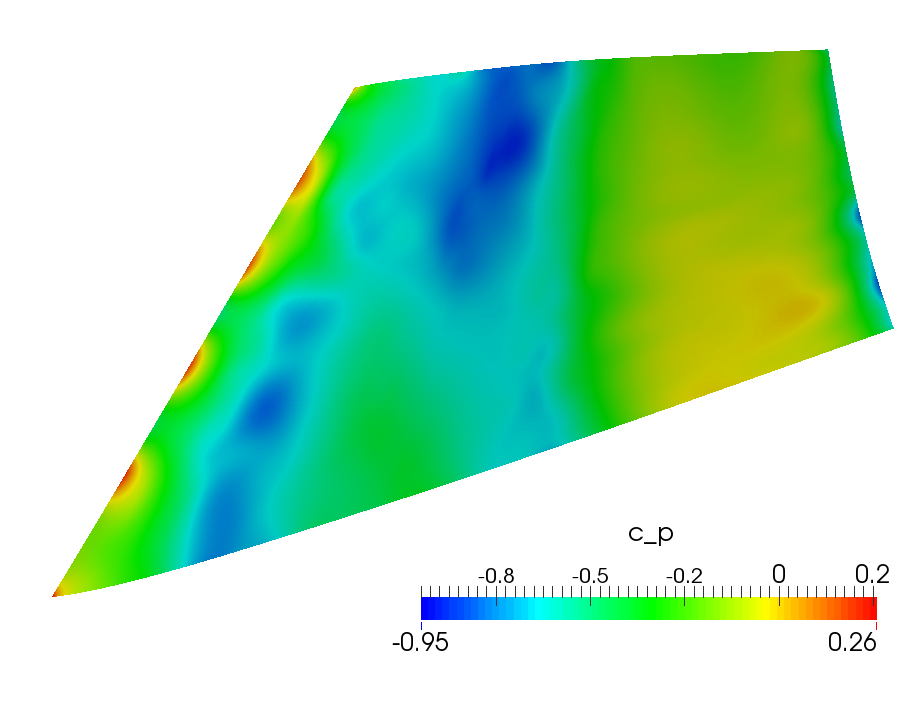}
\subcaption{$N=206$ and $N^h=594$.}
\end{subfigure}
\begin{subfigure}{.49\linewidth}
\includegraphics[scale=0.24]{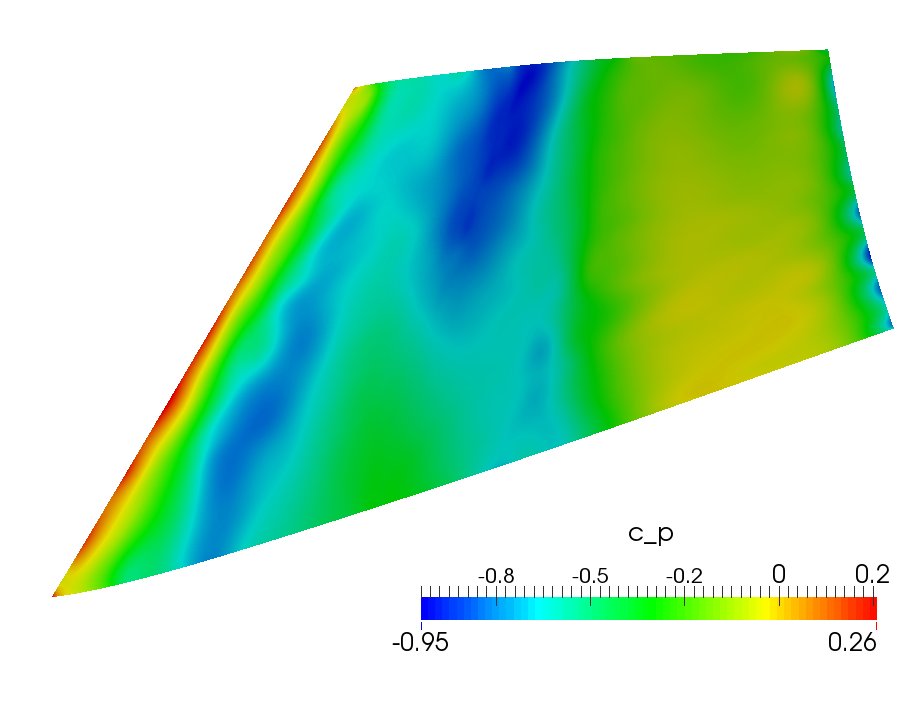}
\subcaption{$N=412$ and $N^h=594$.} 
\end{subfigure}
\begin{subfigure}{.49\linewidth}
\includegraphics[scale=0.24]{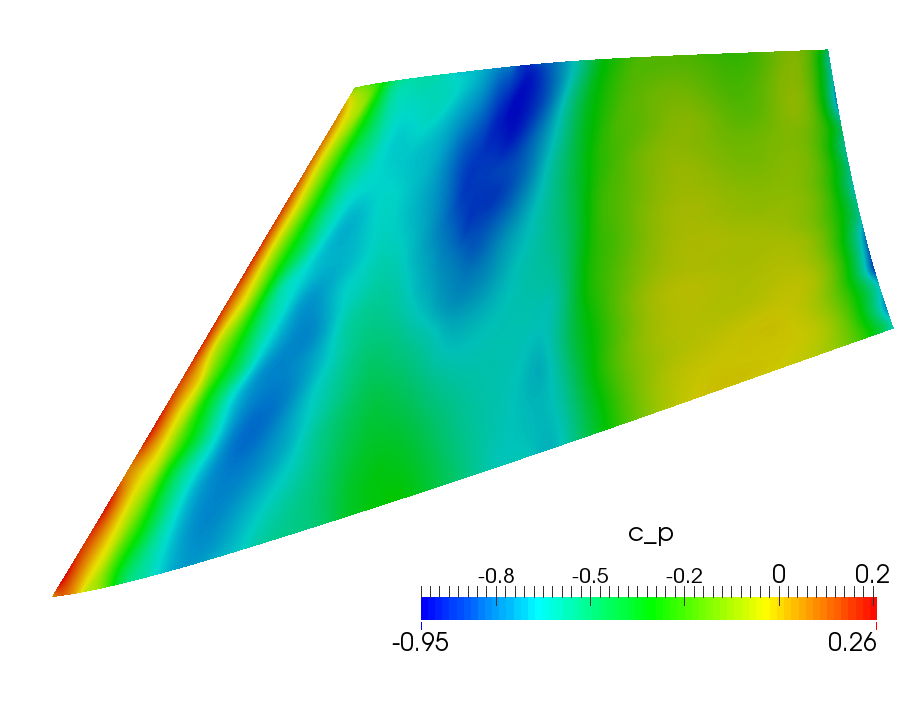}
\subcaption{$N=618$ and $N^h=594$.}
\end{subfigure}
\begin{subfigure}{.49\linewidth}
\includegraphics[scale=0.24]{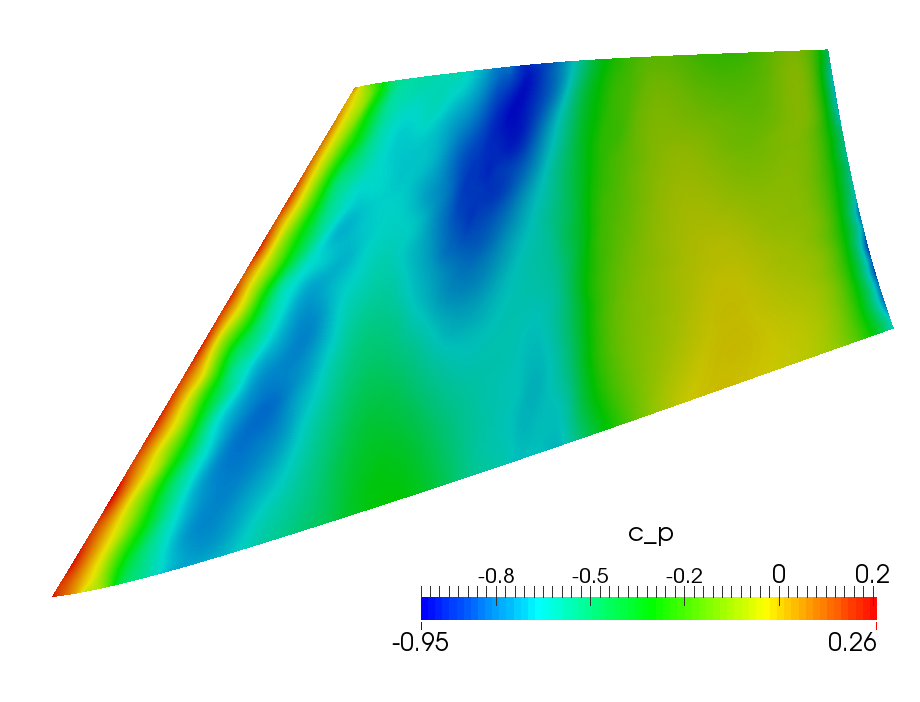} 
\subcaption{$N=824$ and $N^h=989$.}
\end{subfigure}
\begin{subfigure}{.49\linewidth}
\includegraphics[scale=0.24]{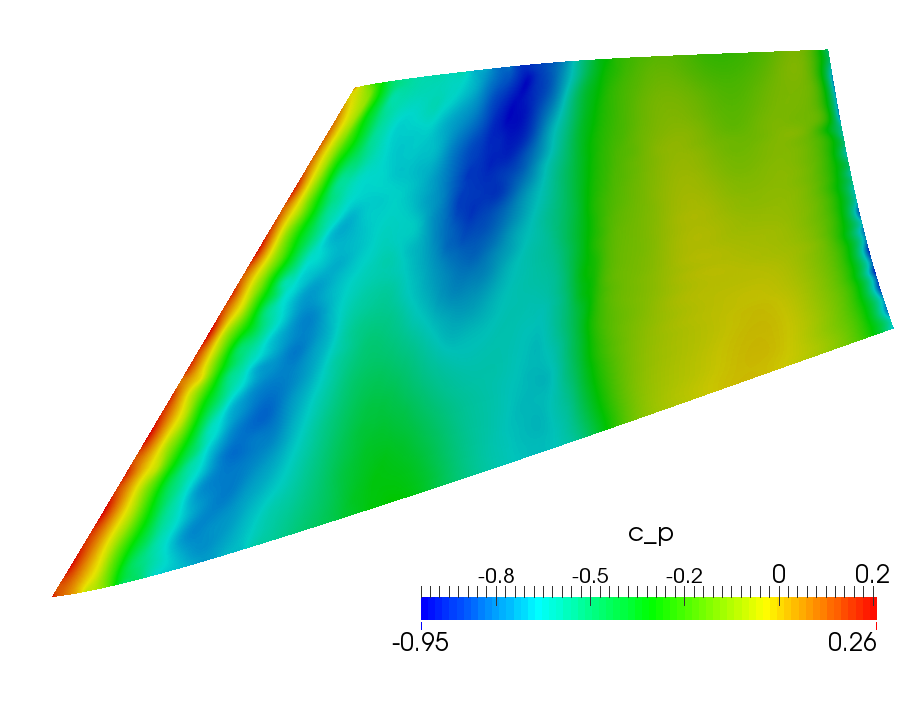}
\subcaption{$N=824$ and $N^h=2079$.} 
\end{subfigure}
\caption{\label{fig:soar_mult_CP_fields} Estimated pressure coefficient field $\hat{C}^h_p$ based on different number of data points $N$ and number of basis functions $N^h$.}
\end{figure}

\section{Conclusions and discussion}
In this paper, we proposed a methodology, IGS, to deal with functional data defined on surfaces described by NURBS, specifically to reconstruct the functions from noisy observations. The proposed IGS method has the potential of being widely applicable, in particular in industrial contexts where geometries are commonly defined by NURBS. Simulations indicate that IGS is comparable to other widely used methods, such as TPS, in cases where the latter is applicable. Moreover, IGS avoids the use of complex meshing procedures since the geometry of the surface is directly used as a data of the problem. IGS is also computationally efficient; this is due to the fact that the NURBS basis functions have compact support and thus the matrices involved in the computations are usually very sparse. These are very convenient features, especially if the number of basis functions is relatively large (e.g. in the case of a complex surface) and/or when there are many data points.

A differential penalization is equivalent to the assumption that the function to be estimated must be close to the kernel of the penalization operator \citep{Green93}. In the case where the observed physical phenomenon is driven by a known PDE, it would be convenient to use a penalization operator related to the PDE \citep{Azzimonti2}. Since IGS is based on IGA, it would be easy to implement other kind of penalization. IGS also allows the application of different kind of boundary conditions, which can be useful in practical applications \citep{Wood08,Sangalli13,Azzimonti1}. In addition, IGS can straightforwardly deal with data in one, two, or even three dimensions. Indeed, all the results presented here can be extended to any lower dimensional manifold defined by NURBS. Moreover, as shown in the application, functionals of an estimated field, as the aerodynamic force, can be easily computed. In addition, IGS is very flexible and allows local refinement of the basis functions, which can be needed when the distribution of the data points is not uniform. Finally, we remember that it is also possible to extend this model to take account of spatially varying covariates on the manifold, using a generalized additive framework.\vspace*{1cm}

\noindent \textbf{Acknowledgments} \\
M. W. and L. D. contributed equally to this paper. The authors aknowledge Prof. Alfio Quarteroni and Prof. Fabio Nobile (EPF Lausanne), Denis Devaud (ETH Zürich), and Lionel Wilhelm (EPF Lausanne) for very insightful discussions and suggestions. M. W. would like to thank Prof. Yves Till\'e (Université de Neuch\^atel) for his support.   

\section*{Bibliography}
\bibliography{isogeom}
\bibliographystyle{elsarticle-harv}
\addcontentsline{toc}{chapter}{Bibliography}
\end{document}